\pdfoutput=1
\documentclass[
  aps, pre, reprint, amsmath, amssymb,
  superscriptaddress, nofootinbib, floatfix
]{revtex4-2}

\usepackage{bm}
\usepackage{pgfplots}
\usepgfplotslibrary{groupplots}
\pgfplotsset{compat=1.18}
\usepackage{mathtools}
\usepackage{microtype}
\usepackage{amsthm}

\theoremstyle{plain}
\newtheorem{proposition}{Proposition}
\newtheorem{corollary}[proposition]{Corollary}
\newtheorem{conjecture}{Conjecture}
\theoremstyle{definition}
\newtheorem{definition}{Definition}
\theoremstyle{remark}
\newtheorem*{remark}{Remark}

\allowdisplaybreaks
\newcommand{\kT}{k_{B}T}
\newcommand{\Hp}{H_{\perp}}
\newcommand{\Hpk}[1]{H_{\perp,#1}}
\newcommand{\Dp}{D_{\perp}}
\newcommand{\Dpar}{D_{\parallel}}
\newcommand{\Gsel}{\mathcal{G}}
\newcommand{\Fisher}{\mathcal{I}}
\newcommand{\dd}{\mathrm{d}}
\newcommand{\spec}{\operatorname{spec}}
\newcommand{\Vd}{V^{\ddagger}}
\newcommand{\fnc}{f_{\rm nc}}
\newcommand{\Kfr}{K}

\begin{document}

\title{Channel selection at identically vanishing dissipation difference:\\
       isolating the frenetic sector of the overdamped path measure}

\author{Shlomo Segal}
\noaffiliation

%=====================================================================
\begin{abstract}
Transition-state and Kramers--Langer theory determine reaction channel selection
from data at a single point: the saddle. We show this is insufficient in a
precise, constructive sense, using the standard split of the Onsager--Machlup
path weight into a time-antisymmetric part fixed by the entropy flux to the
medium and a time-symmetric part, the frenesy or dynamical activity. We first
settle when a dissipation-based criterion can distinguish two \emph{forward}
histories at all. For overdamped dynamics with drift $\beta D(-\nabla V+\fnc)$
the answer is exact and purely topological: the entropy difference between two
histories sharing endpoints is $\beta$ times the circulation of $\fnc$ around the
loop they enclose, and it vanishes identically for every gradient flow. We then
construct a two-channel gradient landscape whose channels share both endpoints,
whose barrier heights are equal identically, and whose saddle Hessians are the
same matrix. The entropy sector is therefore switched off by construction, all
local saddle-point theories predict a $50{:}50$ split exactly, and the branching
ratio isolates the frenetic sector. Direct Langevin simulation gives
$P(+)=0.4568\pm0.0012$. We identify a covariant geometric contribution to
pathway selection within the frenetic sector, $\tfrac12\ln\det(\Dp\Hp)$, show
that the divergence-of-drift observable previously proposed for this role is not
invariant under nonlinear coordinate change whereas this one is, and find that it
reproduces the measured branching to within $0.5\sigma$ with no fitted
parameters, tracking a one-parameter family with a passing null control. Because
selection depends only on a ratio of determinants, stiff spectator modes cancel
identically. Finally we give a parameter-free prediction for the complementary
sector: a solenoidal force of strength $\varepsilon$ must contribute exactly
$2\beta\varepsilon/\pi$ to $\ln[P(+)/P(-)]$. These results establish an exactly
controlled counterexample to the universal sufficiency of local saddle-point
information for reaction-channel selection.
\end{abstract}

\maketitle

%=====================================================================
\section{Introduction}

\subsection{The problem}

The dominant framework for predicting which of several competing pathways a
system takes is transition-state theory (TST): compare activation free energies,
and the lower barrier wins, exponentially. Where dynamical corrections are
needed, Kramers--Langer theory~\cite{Kramers1940,Langer1969} supplies a prefactor
built from the Hessian of the potential at the saddle and the friction tensor
there.

Both are \emph{local at the saddle}. Everything the theory knows about the
landscape sits in one point per channel.

This is usually adequate, because the exponential dominates: a barrier difference
of a few $\kT$ swamps any prefactor of order unity. But there is a class of
problems where it fails by construction---those in which the barriers are
degenerate. Then the exponential carries no information at all, the prefactor is
the entire signal, and a theory evaluating the prefactor at a single point can be
shown to give the wrong answer. Post-transition-state bifurcations are the
extreme case~\cite{Ess2008,Hare2017}: one saddle, two products, activation free
energy difference identically zero, saddle Hessians the same object---yet
branching ratios that are generically not $1{:}1$ and are experimentally
reproducible.

\subsection{Two sectors of the path measure}
\label{sec:twosectors}

The organising observation of this paper is not new, and we do not present it as
new. For a Markov diffusion the weight of a path splits into a part odd under
time reversal, fixed by the entropy flux to the medium, and a part even under
time reversal, the \emph{frenesy} or dynamical
activity~\cite{Maes2020,Baiesi2009,MaesNetocny2008,Seifert2012}. Schematically,
\begin{equation}
  P[\gamma]\;\propto\;\exp\!\bigl(\tfrac12\sigma[\gamma]-\Kfr[\gamma]\bigr).
  \label{eq:split0}
\end{equation}
Fluctuation theorems of Crooks type~\cite{Crooks1999,Lebowitz1999} constrain the
first term only. The second is a separate object, and in a large class of
problems it is the only one that carries any signal.

What this paper does is to make that last clause quantitative, and then to build
a system in which it is realised exactly. Section~\ref{sec:when} answers, for
overdamped dynamics, the question of \emph{when} the entropy sector can
distinguish two forward histories: the answer is a circulation, and it is zero
for every gradient flow. Section~\ref{sec:cov} identifies the geometric component
of the frenesy and shows it is a coordinate scalar, correcting an observable
proposed earlier for this role. Section~\ref{sec:numerics} exhibits a landscape
in which the entropy sector vanishes identically and measures what is left.

\subsection{Relation to two earlier preprints}
\label{sec:intro-rel}

This work stands in a different relation to each of two earlier preprints by the
present author, and we distinguish the two relations sharply.

\paragraph*{Reference~\cite{SegalGeom}: a repair.}
That preprint proposed a local observable
$\chi^{*}=\int(\nabla E\cdot A)^{2}/(\epsilon_{0}+|\nabla\!\cdot\!A|)\,\rho_{\rm ss}$,
derived from a Wasserstein gradient-flow formulation~\cite{JKO1998}, and argued
that it distinguishes kinetically distinct channels that free-energy analysis
cannot separate. The target problem is real and the geometric intuition---that
structure beyond the saddle governs channel selection---is, we believe, correct,
and it is placed on a rigorous footing here. What requires revision is the
specific observable. There are three reasons a covariant reformulation is needed,
in increasing order of importance. First, the
numerator is $|\nabla E|^{4}/\gamma^{2}$, a purely first-order quantity: the
geometry resides entirely in the denominator. Second, that denominator is a trace,
$|\nabla\!\cdot\! A|=|\operatorname{Tr}H|/\gamma$, so two channels with equal
trace and different spectra are indistinguishable---precisely the case
constructed in Sec.~\ref{sec:landscape}. Third, and decisively,
$\nabla\!\cdot\! A$ is \emph{not a scalar} under nonlinear coordinate change;
Sec.~\ref{sec:invtest} exhibits a point in a physical landscape at which a smooth
relabelling reverses its sign while altering no observable. An observable that
changes sign under relabelling cannot select anything. Section~\ref{sec:cov}
supplies the covariant form and identifies it with the geometric component
of $\Kfr$, realising the intuition of Ref.~\cite{SegalGeom} in an invariant way.

\paragraph*{Reference~\cite{SegalOoL}: a qualification, and a provenance.}
That preprint proposed, in the tradition of dissipation-based accounts of
self-organisation~\cite{England2013}, that non-equilibrium histories carry a
probabilistic bias toward greater integrated dissipation, and inferred the
selection of template-directed replication over simple autocatalysis. Its central step, which
it labels an ansatz, is
$P(x_{1})/P(x_{2})\approx e^{(\sigma_{1}-\sigma_{2})/2}$ for two distinct forward
histories.

Three things can now be said that the preprint does not say, and we regard the
first two as favourable to it.

\emph{The ansatz is not a guess, and its coefficient is exact.}
Equation~\eqref{eq:split} below shows that it is precisely the
$\Delta \Kfr\to0$ limit of the standard time-reversal decomposition of the path
action. The factor $\tfrac12$, the sign, and the direction of the bias are all
correct, and they follow from the structure of the path measure rather than from
analogy. What the preprint presents as a modelling assumption is in fact a
controlled approximation with a single identifiable remainder.

\emph{Its validity is governed by one measurable ratio.}
Section~\ref{sec:disc-ool} reports an exact enumeration study establishing that
the accuracy of the ansatz is controlled not by the strength of the driving---as
one might expect, and as the preprint implicitly assumes---but by
$R=\mathrm{std}(\Delta \Kfr)/\mathrm{std}(\Delta\sigma/2)$, the spread of the
neglected term relative to the retained one. This is a usable criterion, and it
was not previously available.

\emph{The attribution is wrong and the asymptotic conclusion is unproven.}
The preprint derives the ansatz from the Crooks
theorem~\cite{Crooks1999}; that route does not exist, since Crooks compares a
path with its own time reverse. The correct provenance is
Eq.~\eqref{eq:split}. Separately, the doubly-exponential selection ratio does not
follow, because the neglected term is extensive in the same quantity as the
retained one; Sec.~\ref{sec:disc-ool} quantifies this. We emphasise that this is
a statement of non-proof, not of refutation: our calculations leave the
qualitative conclusion open and identify precisely the quantity that decides it.

Nothing below contains a population variable, a replicator, an error threshold,
or a chemical network. What this paper supplies to that line of work is the
correct decomposition, an exact criterion for when a dissipation-based mechanism
can discriminate two forward histories at all (Sec.~\ref{sec:when}), a validity
criterion for the ansatz, and a measurement of the neglected sector in a
controlled setting.

\subsection{Status of claims}
\label{sec:status}

Because this paper mixes theorems, simulations and conjectures, we sort them
explicitly. Table~\ref{tab:status} is the authoritative statement of what is
being asserted; the prose elsewhere should be read against it.

\begin{table*}[t]
\caption{\label{tab:status}%
Status of the claims of this paper. ``Proved'' means proved here or elementary;
``supported'' means established numerically within the stated statistics for the
specific system studied; ``open'' means neither established nor refuted by
anything computed here; ``conjectured'' means stated and not established.}
\begin{ruledtabular}
\begin{tabular}{lc}
Claim & Status\\
\colrule
Time-reversal split, Eq.~\eqref{eq:split}, in the stated convention & known\\
Circulation identity, Prop.~\ref{prop:circ} & proved\\
$\Delta\sigma\equiv0$ for the benchmark, Cor.~\ref{cor:zero} & proved\\
$\nabla\!\cdot b$ not a scalar; $\det(DH)$ is, Prop.~\ref{prop:cov} & proved\\
$\chi^{*}$ of~\cite{SegalGeom} not coordinate-invariant, Prop.~\ref{prop:cov} & proved\\
Insufficiency of saddle data, Prop.~\ref{prop:insuff} & proved\\
Spectator cancellation, Prop.~\ref{prop:spectator} & proved\\
$\Phi$ predicts the measured $\Delta \Kfr$ & supported\\
$\Phi$ tracks the $\delta_{0}$ family & supported\\
Slope $2\beta/\pi$ in the solenoidal test, Sec.~\ref{sec:eps} & predicted\\
Ansatz of~\cite{SegalOoL} exact as $\Delta \Kfr\to0$ & proved\\
Validity of that ansatz controlled by $R$, Table~\ref{tab:ansatz} & supported\\
Doubly-exponential conclusion of~\cite{SegalOoL} & open\\
Two-tier functional, Conj.~\ref{conj:twotier} & conjectured\\
Frenetic reading of the error threshold, Conj.~\ref{conj:eigen} & conjectured\\
Anything about replicators or abiogenesis & not addressed\\
\end{tabular}
\end{ruledtabular}
\end{table*}

\subsection{What is and is not new here}
\label{sec:novelty}

We state this plainly, because several ingredients are standard.

The entropy/frenesy split is standard~\cite{Maes2020,Baiesi2009,MaesNetocny2008}
and we claim no part of it. The circulation identity of
Prop.~\ref{prop:circ} is Stokes' theorem; its content is in the question it
answers, not in its proof. The effective potential $V+\tfrac{\kT}{2}\ln\det\Hp$
is the anisotropic generalisation of the Fick--Jacobs entropy
potential~\cite{Jacobs1967,Zwanzig1992,Reguera2001}, and channel-width modulation
of flux is standard entropic-barrier physics. That the Hessian is a $(0,2)$-tensor
and the diffusion tensor a $(2,0)$-tensor, so that their product is spectrally
invariant, is elementary.

What we do claim, in descending order of what we believe survives scrutiny:

\emph{(1) A benchmark in which the entropy sector is switched off identically}
(Sec.~\ref{sec:landscape}, Cor.~\ref{cor:zero}): a smooth two-dimensional
gradient landscape whose two channels share both endpoints, whose barrier heights
are equal as an algebraic consequence of the functional form rather than to
numerical tolerance, and whose saddle Hessians are the same matrix. The branching
ratio therefore isolates the frenetic contribution by construction. We are not
aware of an existing benchmark this sharp and offer it as a test system for any
method claiming to predict channel selection.

\emph{(2) Identification of the covariant geometric component of the frenesy}
(Sec.~\ref{sec:cov}) and the correction of Ref.~\cite{SegalGeom} that follows.
The mathematics is standard; the identification, and the demonstration that the
earlier observable fails a coordinate-change test on a physical landscape, are
not.

\emph{(3) A spectator-cancellation result} (Prop.~\ref{prop:spectator}), reducing
the effective dimensionality from $3N-6$ to the number of modes whose stiffness
differs between channels.

\emph{(4) A parameter-free prediction for the complementary sector}
(Sec.~\ref{sec:eps}): a solenoidal perturbation of known circulation must shift
the log-branching by an amount fixed entirely by the enclosed area. No constant
is fitted. It has not been measured.

We claim no unification of Refs.~\cite{SegalGeom} and~\cite{SegalOoL}, and no
result bearing on the origin of life.

%=====================================================================
\section{The two sectors, and when each carries signal}
\label{sec:when}

\subsection{Dynamics and path measure}

Consider overdamped Langevin dynamics on $\mathbb{R}^{n}$ with potential $V$,
temperature $\kT=\beta^{-1}$, isotropic constant diffusivity $D$, and a possibly
non-conservative force $\fnc$,
\begin{equation}
  \dot x=b(x)+\sqrt{2D}\,\xi(t),
  \qquad
  b=\beta D\bigl(-\nabla V+\fnc\bigr),
  \label{eq:langevin}
\end{equation}
with $\xi$ standard white noise. In the midpoint (Stratonovich) convention the
Onsager--Machlup action~\cite{OnsagerMachlup1953} of a path $\gamma$ on $[0,T]$ is
\begin{equation}
  S[\gamma]=\frac{1}{4D}\int_{0}^{T}\!\!\bigl|\dot\gamma-b(\gamma)\bigr|^{2}\dd t
            +\frac12\int_{0}^{T}\!\!\nabla\!\cdot b(\gamma)\,\dd t,
  \label{eq:OM}
\end{equation}
with $P[\gamma]\propto e^{-S[\gamma]}$.

\subsection{Time-reversal split}

Let $\bar\gamma(t)=\gamma(T-t)$. Expanding the square in Eq.~\eqref{eq:OM},
only the cross term is odd under reversal, so
\begin{equation}
  S[\bar\gamma]-S[\gamma]=\frac{1}{D}\int_{\gamma}b\cdot\dd\gamma
  \;\equiv\;\sigma[\gamma],
  \label{eq:sigmadef}
\end{equation}
$\sigma$ being the entropy flux to the medium. Splitting $S$ into its odd and
even parts under reversal gives
\begin{equation}
  \boxed{\;
  P[\gamma]\propto\exp\!\bigl(\tfrac12\sigma[\gamma]-\Kfr[\gamma]\bigr),
  \qquad
  \Kfr\equiv\tfrac12\bigl(S[\gamma]+S[\bar\gamma]\bigr),\;}
  \label{eq:split}
\end{equation}
explicitly
\begin{equation}
  \Kfr[\gamma]=\frac{1}{4D}\!\int\!|\dot\gamma|^{2}\dd t
              +\frac{1}{4D}\!\int\!|b|^{2}\dd t
              +\frac12\!\int\!\nabla\!\cdot b\,\dd t .
  \label{eq:frenesy}
\end{equation}

\begin{remark}[on conventions, and on what is convention-free]
Equation~\eqref{eq:split} is a rearrangement of Eq.~\eqref{eq:OM}, not a law of
nature, and Eq.~\eqref{eq:OM} is convention-dependent: a different discretisation
of the stochastic integral moves terms between the quadratic part and the
Jacobian, and hence redistributes them within $\Kfr$. What is
convention-independent is the \emph{odd} part, Eq.~\eqref{eq:sigmadef}, which is
fixed by local detailed balance and reproduces the Crooks
relation~\cite{Crooks1999}. All statements below concerning $\sigma$ are therefore
convention-free; statements concerning $\Kfr$ are to be read within the stated
convention. This split is standard~\cite{Maes2020,Baiesi2009,MaesNetocny2008} and
we use it only as an organising device.
\end{remark}

\subsection{When can dissipation distinguish two forward histories?}

Fluctuation theorems compare a path with its own reverse. Selection arguments
require instead a comparison of two \emph{different forward} histories. The
following settles when the entropy sector can supply such a comparison at all.

\begin{proposition}[circulation identity]
\label{prop:circ}
Let $\gamma_{1},\gamma_{2}$ be two histories of the
dynamics~\eqref{eq:langevin} with common initial point and common final point,
and let $\Sigma$ be any surface bounded by the closed loop
$\gamma_{1}\circ\bar\gamma_{2}$. Then
\begin{equation}
  \sigma[\gamma_{1}]-\sigma[\gamma_{2}]
  =\beta\oint_{\gamma_{1}\circ\bar\gamma_{2}}\!\!\fnc\cdot\dd x
  =\beta\!\int_{\Sigma}(\nabla\times \fnc)\cdot\dd\bm{\Sigma}.
  \label{eq:circ}
\end{equation}
\end{proposition}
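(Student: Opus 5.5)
The plan is to start from the definition of the entropy flux in Eq.~\eqref{eq:sigmadef}, $\sigma[\gamma]=D^{-1}\int_{\gamma}b\cdot\dd\gamma$, and substitute the drift $b=\beta D(-\nabla V+\fnc)$ from Eq.~\eqref{eq:langevin}. The factors of $D$ cancel, leaving
\begin{equation*}
\sigma[\gamma]=-\beta\int_{\gamma}\nabla V\cdot\dd\gamma+\beta\int_{\gamma}\fnc\cdot\dd\gamma .
\end{equation*}
Because the integral is taken in the midpoint (Stratonovich) convention, the ordinary chain rule holds along the path. The gradient term therefore integrates exactly, $\int_{\gamma}\nabla V\cdot\dd\gamma=V(\gamma(T))-V(\gamma(0))$, even for the non-differentiable sample paths of the diffusion. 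This is the one point where the convention matters. An It\^o reading would produce an extra $D\int\Delta V\,\dd t$ term that does not depend only on the endpoints. That is consistent with the Remark: the odd part is the convention-free object precisely when it is read as a Stratonovich integral.

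Next I would take the difference for two histories $\gamma_{1},\gamma_{2}$ with common endpoints. The boundary terms $V(\gamma(T))-V(\gamma(0))$ are identical and cancel, so
\begin{equation*}
\sigma[\gamma_{1}]-\sigma[\gamma_{2}]=\beta\Bigl(\int_{\gamma_{1}}\fnc\cdot\dd x-\int_{\gamma_{2}}\fnc\cdot\dd x\Bigr).
\end{equation*}
Reversing orientation flips the sign of a line integral, so $-\int_{\gamma_{2}}=\int_{\bar\gamma_{2}}$. The concatenation $\gamma_{1}\circ\bar\gamma_{2}$ is a closed loop, because $\gamma_{2}$ reversed starts at the common final point and ends at the common initial point. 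This gives the first equality in Eq.~\eqref{eq:circ}. If the two histories are defined on different time intervals, nothing changes, since the line integral is reparametrisation invariant.

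The second equality is Stokes' theorem applied to the loop and any surface $\Sigma$ bounded by it. For $n=3$ this is the classical curl formula. For general $n$ I would read $(\nabla\times\fnc)\cdot\dd\bm\Sigma$ as the integral over $\Sigma$ of the two-form $\dd\omega$, where $\omega=\fnc\cdot\dd x$ has components $\partial_{i}f_{j}-\partial_{j}f_{i}$. Since $\mathbb{R}^{n}$ is simply connected, such a $\Sigma$ exists and the result does not depend on which one is chosen.

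The main obstacle is regularity. Stochastic paths are only H\"older-$\tfrac12$, so neither the classical Stokes theorem nor the classical line integral applies directly. I would handle this by approximation. First prove the identity for piecewise-linear (Wong--Zakai) interpolations of the paths, for which everything is classical. Then pass to the limit, using the fact that Stratonovich integrals of smooth one-forms are the limits of these piecewise-linear integrals. The surface integral over a spanning surface (for example, a cone to a fixed point) converges as well, assuming $\fnc$ is $C^{1}$ with locally bounded derivatives. If that limiting argument proves delicate, I would state the proposition pathwise for absolutely continuous paths, which is the setting of the Onsager--Machlup action in any case.
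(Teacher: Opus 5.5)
Your proposal is correct and is the paper's own proof: substitute $b=\beta D(-\nabla V+\fnc)$ into Eq.~\eqref{eq:sigmadef}, note that the gradient part reduces to $-\beta\,\Delta V$ and cancels for shared endpoints, and apply Stokes' theorem to the remaining closed-loop integral of $\fnc$. Your extra remarks add rigour that the paper's two-line proof leaves implicit, without changing the argument: the Stratonovich chain rule, reversal of orientation, the $n$-dimensional reading via two-forms, and the Wong--Zakai regularisation.
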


\begin{proof}
From Eqs.~\eqref{eq:langevin} and~\eqref{eq:sigmadef},
$\sigma[\gamma]=\beta\int_{\gamma}(-\nabla V+\fnc)\cdot\dd\gamma
 =-\beta\,\Delta V+\beta\int_{\gamma}\fnc\cdot\dd\gamma$.
The first term depends only on the endpoints, which the two histories share, and
cancels in the difference. The remainder is a line integral over the closed loop;
Stokes' theorem gives the surface form.
\end{proof}

\begin{corollary}[gradient flows carry no entropic signal]
\label{cor:zero}
If $\fnc\equiv0$ then $\sigma[\gamma_{1}]=\sigma[\gamma_{2}]$ identically for
\emph{every} pair of histories sharing endpoints, and
Eq.~\eqref{eq:split} reduces to
$P[\gamma_{1}]/P[\gamma_{2}]=e^{-(\Kfr_{1}-\Kfr_{2})}$.
\end{corollary}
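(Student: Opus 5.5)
The corollary follows directly from Proposition~\ref{prop:circ} together with the split~\eqref{eq:split}, so the plan is to specialise rather than compute.

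\textbf{Step 1: the entropy difference vanishes.} Setting $\fnc\equiv0$ in Eq.~\eqref{eq:circ} makes the loop integral $\beta\oint\fnc\cdot\dd x$ vanish for every closed loop $\gamma_{1}\circ\bar\gamma_{2}$. Hence $\sigma[\gamma_{1}]=\sigma[\gamma_{2}]$. This can also be read off the first line of the proof of Prop.~\ref{prop:circ}: for a pure gradient drift,
\begin{equation*}
  \sigma[\gamma]=-\beta\bigl(V(\gamma(T))-V(\gamma(0))\bigr),
\end{equation*}
which depends only on the endpoints. No smoothness of $\Sigma$ or Stokes' theorem is needed in this direction, because the line-integral form already suffices. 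So the conclusion holds for arbitrary continuous paths with the stated endpoints, not just those bounding a nice surface.

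\textbf{Step 2: the ratio of path weights.} Apply Eq.~\eqref{eq:split} to each history:
\begin{equation*}
  P[\gamma_{i}]=\mathcal{N}\exp\bigl(\tfrac12\sigma[\gamma_{i}]-\Kfr[\gamma_{i}]\bigr).
\end{equation*}
The normalisation $\mathcal{N}$ is common to all paths on $[0,T]$, so it cancels in the ratio. The $\tfrac12\sigma$ terms cancel by Step 1, leaving
\begin{equation*}
  P[\gamma_{1}]/P[\gamma_{2}]=e^{-(\Kfr_{1}-\Kfr_{2})}.
\end{equation*}

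\textbf{Main obstacle.} The only delicate point is interpretive. The stochastic integral $\int b\cdot\dd\gamma$ must be taken in the midpoint convention, in which the chain rule holds and $\int\nabla V\cdot\dd\gamma=\Delta V$ exactly. In an It\^o convention there would be an extra $\int\Delta V\,\dd t$ term. That term is even under time reversal, so it belongs in $\Kfr$ rather than $\sigma$. I would state this explicitly, citing the Remark: it explains why the $\sigma$ statement is convention-free while the identification of the remainder with $\Kfr$ is to be read within the stated convention.

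Finally, I would check that both paths share the same time window $[0,T]$, so that the same normalisation applies to each and the ratio is well defined.
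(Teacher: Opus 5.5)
Your proposal is correct and follows the paper's route. The corollary is obtained by setting $\fnc\equiv0$ in Prop.~\ref{prop:circ}, whose proof already gives $\sigma[\gamma]=-\beta\,\Delta V$, a quantity fixed by the endpoints; the $\tfrac12\sigma$ terms and the common normalisation then cancel in the ratio of Eq.~\eqref{eq:split}. Your added remarks are consistent with the paper's Remark on conventions: Stokes' theorem is not needed in this direction, and any It\^o correction is time-even and therefore belongs to $\Kfr$.
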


\begin{remark}
Proposition~\ref{prop:circ} is Stokes' theorem and we claim no mathematical
novelty for it. Its use here is to convert a vague question---whether
``dissipation-driven selection'' can discriminate between pathways---into a sharp
one with a computable answer. Three consequences are worth stating. First, in a
gradient flow the answer is no, identically, at all temperatures and for all
observation times. Second, in a driven system the answer depends only on the
enclosed circulation, so two histories that enclose no flux are entropically
indistinguishable however much each of them dissipates. Third, the magnitude of
the entropic signal is a geometric quantity---an enclosed area weighted by
$\nabla\times \fnc$---and is therefore predictable without simulation. We exploit
the third point in Sec.~\ref{sec:eps}.
\end{remark}

\subsection{Degeneracy conditions}
\label{sec:degen}

Let channels $\gamma_{1},\gamma_{2}$ connect the same reactant and product
basins. Write $\Delta\Vd_{k}$ for the barrier along channel $k$ and
$H^{\ddagger}_{k}$ for the Hessian at its saddle.

\begin{definition}[rate degeneracy]
The channels are \emph{rate-degenerate} if $\Delta\Vd_{1}=\Delta\Vd_{2}$.
\end{definition}

\begin{definition}[Langer degeneracy]
They are \emph{Langer-degenerate} if in addition
$\spec(DH^{\ddagger}_{1})=\spec(DH^{\ddagger}_{2})$.
\end{definition}

\begin{definition}[entropic degeneracy]
They are \emph{entropically degenerate} if $\sigma[\gamma_{1}]=\sigma[\gamma_{2}]$.
By Cor.~\ref{cor:zero} every pair of channels of a gradient flow sharing
endpoints is entropically degenerate.
\end{definition}

Under Langer degeneracy both the TST rate and the Langer-corrected rate coincide
for the two channels, and every method whose output is a function of barrier
height and saddle curvature predicts $1{:}1$---not approximately, but as a
statement about what those theories can express. Under simultaneous Langer and
entropic degeneracy the branching ratio is, by Eq.~\eqref{eq:split}, a pure
measurement of $\Delta \Kfr$. Section~\ref{sec:numerics} realises both conditions
at once.

%=====================================================================
\section{The geometric component of the frenesy}
\label{sec:cov}

\subsection{Tube reduction}

Let $\gamma$ be a smooth curve parameterised by arclength $s$ with unit tangent
$\hat t(s)$. At each $s$ let $\Hp(s)$ be the Hessian of $V$ restricted to the
hyperplane orthogonal to $\hat t$, and $\Dp(s)$ the corresponding block of $D$.
Assume transverse relaxation is fast compared with progress along
$\gamma$---the adiabatic, or Fick--Jacobs, condition; see Sec.~\ref{sec:limits}
for its cost. Integrating out the transverse Gaussian fluctuations,
\begin{equation}
  Z_{\perp}(s)=\int\!\dd^{\,n-1}u\;e^{-\beta u^{\!\top}\Hp(s)u/2}
  =\frac{(2\pi\kT)^{(n-1)/2}}{\sqrt{\det\Hp(s)}},
  \label{eq:Zperp}
\end{equation}
which yields an effective one-dimensional free energy along the tube,
\begin{equation}
  \boxed{\;
  \begin{gathered}
    \mathcal{F}(s)=V(\gamma(s))+\Phi(s),\\[2pt]
    \Phi(s)\equiv\frac{\kT}{2}\ln\det\bigl[\tau_{0}\beta\,\Dp(s)\Hp(s)\bigr].
  \end{gathered}\;}
  \label{eq:Fbox}
\end{equation}
Here $\tau_{0}$ is an arbitrary reference time that cancels from every ratio
below. We call $\Phi$ the \emph{geometric potential}.

Two features locate $\Phi$ within the split~\eqref{eq:split}. It enters
$\mathcal{F}$ with a factor $\kT$, so in $\beta\mathcal{F}$ it carries no power
of $\beta$: it is a prefactor-order object. And it descends from the Jacobian
term $\tfrac12\!\int\!\nabla\!\cdot b\,\dd t$ of Eq.~\eqref{eq:frenesy}, which is
time-symmetric. $\Phi$ is thus a component of the frenesy, not of the entropy
flux---the transverse-geometric component, obtained after tube reduction. We do
not claim it exhausts $\Kfr$: the kinetic and $|b|^{2}$ terms of
Eq.~\eqref{eq:frenesy} contribute as well, and are not evaluated here.

\subsection{Invariance}

\begin{proposition}[covariance]
\label{prop:cov}
Let $u=\varphi(x)$ be a diffeomorphism with Jacobian
$J=\partial\varphi/\partial x$. Then \emph{(i)} $\nabla\!\cdot b$ is not
invariant, and \emph{(ii)} $\det(DH)$ is invariant.
\end{proposition}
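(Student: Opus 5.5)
Both parts follow from the transformation law of each object under $u=\varphi(x)$, evaluated pointwise at the point in question. For the two claims I would track how the drift, the diffusion tensor and the Hessian transform, taking care with where second derivatives of $\varphi$ enter.

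\emph{(i) Non-invariance of $\nabla\!\cdot b$.} Write the dynamics~\eqref{eq:langevin} in the new coordinates by It\^o's lemma. The diffusion becomes $D'=J D J^{\top}$, a $(2,0)$-tensor. The drift picks up the It\^o correction,
\[
  b'^{\,a}=J^{a}{}_{i}\,b^{i}+D^{ij}\,\partial_{i}\partial_{j}\varphi^{a}.
\]
The first term alone already makes the Euclidean divergence non-invariant: $\partial_{u^a}(J^a{}_i b^i)=\partial_i b^i+b^i\,\partial_i\ln|\det J|$. The correction term contributes third derivatives of $\varphi$. Neither piece is a scalar. To establish (i) it suffices to exhibit one explicit counterexample. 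I would take $n=1$ or a shear in $n=2$, choose $b=-\beta D\,V'(x)$ and $\varphi(x)=x+cx^{3}$, and evaluate $\nabla\!\cdot b'$ at a point with $V'\neq0$. This shows directly that it differs from $\nabla\!\cdot b$, and can even be made to reverse sign by choosing $c$. Section~\ref{sec:invtest} does this concretely; for the proposition, a single explicit instance is enough.

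\emph{(ii) Invariance of $\det(DH)$.} Here the main obstacle is that the coordinate Hessian of $V$ is a $(0,2)$-tensor only at critical points. In general,
\[
  \partial_{u}^{2}V=J^{-\top}\Bigl(H-\nabla_{x}V\cdot\partial^{2}_{x}\varphi^{-1}\text{-type terms}\Bigr)J^{-1}.
\]
So I would state (ii) precisely in one of two forms. The first is at stationary points of $V$, which are the saddles and minima used throughout. The second is for $H$ understood covariantly, as the Hessian $\nabla\nabla V$ of the metric $D^{-1}$; this is the tensorial object implicit in Eq.~\eqref{eq:Fbox}. In either reading $H'=J^{-\top}HJ^{-1}$ and $D'=JDJ^{\top}$. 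Hence
\[
  D'H'=J\,(DH)\,J^{-1},
\]
which is a similarity transform. Therefore $\spec(D'H')=\spec(DH)$, and in particular $\det(D'H')=\det(DH)$. The Jacobian factors $\det J$ and $\det J^{-1}$ cancel explicitly.

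The same argument applies to the transverse blocks $\Dp\Hp$ used in $\Phi$. The reason is that restriction to the normal hyperplane is defined through the pairing of a $(2,0)$- with a $(0,2)$-tensor. I expect the delicate step to be exactly this bookkeeping away from critical points. I would handle it by working with the covariant Hessian, whose Christoffel terms cancel the inhomogeneous $\partial^{2}\varphi$ term. I would then note that at stationary points the Christoffel terms multiply $\nabla V=0$, so the naive and covariant Hessians agree there.
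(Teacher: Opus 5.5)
Your proposal follows the paper's proof. Part (i) is settled by an explicit counterexample: the paper simply points to the numerical one in Sec.~\ref{sec:invtest}, and you add an analytic account plus a one-dimensional example. Part (ii) uses $D\mapsto JDJ^{\top}$ and $H\mapsto J^{-\top}HJ^{-1}$, so that $DH\mapsto J(DH)J^{-1}$ is a similarity transform. Your caveat is correct and sharper than the paper's ``on the relevant subspace'': the coordinate Hessian of $V$ is a $(0,2)$-tensor only at critical points, so (ii) must be read either at stationary points or with the covariant Hessian for the metric $D^{-1}$. It is also the reading Table~\ref{tab:inv} actually relies on, since two of its three points are not critical points of $V$.
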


\begin{proof}[Proof of (ii)]
$V$ is a scalar, so its Hessian transforms as a $(0,2)$-tensor on the relevant
subspace, $H\mapsto J^{-\top}HJ^{-1}$; the diffusion tensor is a $(2,0)$-tensor,
$D\mapsto JDJ^{\top}$. Hence
\begin{equation}
  DH\;\longmapsto\;JDJ^{\top}J^{-\top}HJ^{-1}=J\,(DH)\,J^{-1},
\end{equation}
a similarity transformation, whose determinant---indeed whose entire
spectrum---is invariant. Claim (i) is established by the explicit counterexample
of Sec.~\ref{sec:invtest}.
\end{proof}

So $DH$ is a $(1,1)$-tensor while $D$ and $H$ separately are not, and $\Phi$ is a
genuine scalar field. The observable of Ref.~\cite{SegalGeom} is the trace-like
non-invariant cousin of this determinant; the repair is to contract the Hessian
with the diffusion tensor \emph{before} taking the determinant, not after, and to
use the full spectrum rather than the trace.

\begin{remark}
We prove invariance, not uniqueness. Any spectral invariant of $DH$ is equally
invariant. Our claim for $\ln\det$ is not that it is the only invariant available
but that it is the one that \emph{appears}: it is what the Gaussian transverse
integral~\eqref{eq:Zperp} produces, so it enters with the correct coefficient.
Invariance is necessary; derivation from the path measure selects this particular
invariant.
\end{remark}

\subsection{The selection index}

For a channel $\gamma_{k}$ with longitudinal diffusivity $\Dpar$ define
\begin{equation}
  \dd\mu_{k}(s)=\frac{e^{\beta\mathcal{F}_{k}(s)}}{\Dpar(s)}\,\dd s,
  \qquad
  R_{k}=\int_{\gamma_{k}}\dd\mu_{k}(s),
  \label{eq:R}
\end{equation}
and for two channels sharing endpoints
\begin{equation}
  \boxed{\;
  \frac{P_{1}}{P_{2}}=\frac{R_{2}}{R_{1}}=e^{\Gsel},
  \qquad
  \Gsel\equiv\ln R_{2}-\ln R_{1}.\;}
  \label{eq:G}
\end{equation}
We call $\Gsel$ the \emph{geometric selection index}. By Cor.~\ref{cor:zero}, for
a gradient flow $\Gsel$ is a prediction for $-\Delta \Kfr$ at the level of
channel classes, where we define the class-level frenetic difference
\begin{equation}
  \Delta \Kfr\;\equiv\;-\ln\bigl[P_{1}/P_{2}\bigr].
  \label{eq:dKdef}
\end{equation}
Equation~\eqref{eq:dKdef} refers to the summed weight of each channel class, not
to individual paths; the two coincide only under a dominant-path approximation,
and we use the class-level object throughout.

\begin{proposition}[insufficiency of saddle-point data]
\label{prop:insuff}
Under Langer degeneracy, TST and Langer theory both give $P_{1}/P_{2}=1$, whereas
$\Gsel\neq0$ generically. Writing $\mathcal{F}_{k}=V_{k}+\Phi_{k}$ with
$V_{1},V_{2}$ sharing a common maximum value, $\Gsel$ vanishes only if
$\Phi_{1}=\Phi_{2}$ almost everywhere on the integration region, not merely at
the saddle.
\end{proposition}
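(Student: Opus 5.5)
The plan has three parts: the TST/Langer half, the genericity of $\Gsel\neq0$, and a careful reading of the "only if" clause.

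\emph{TST and Langer half.} The TST rate for channel $k$ depends only on $\Delta\Vd_{k}$. The Langer rate has the form
\[
k_{k}\propto\frac{|\lambda^{-}_{k}|}{2\pi}\sqrt{\frac{\det H_{\rm min}}{|\det H^{\ddagger}_{k}|}}\,e^{-\beta\Delta\Vd_{k}},
\]
where $\lambda^{-}_{k}$ is the unstable eigenvalue of $DH^{\ddagger}_{k}$. Its prefactor is therefore a function of $\spec(DH^{\ddagger}_{k})$ together with data in the common reactant basin. By Prop.~\ref{prop:cov} this combination is coordinate-invariant, so it is legitimate to compare channels through spectra. Under Langer degeneracy the barriers and spectra coincide, so the rates coincide. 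The branching ratio $P_{1}/P_{2}=k_{1}/k_{2}$ is then $1$ exactly. This part is immediate from the definitions of Sec.~\ref{sec:degen}.

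\emph{Genericity of $\Gsel\neq0$.} I would argue by perturbation. Fix a Langer-degenerate pair and deform $\Phi_{2}$ by $\eta\,\psi(s)$. Take $\psi$ to be a smooth bump supported away from the saddle, so that the saddle Hessian and barrier are unchanged and Langer degeneracy is preserved. Then
\[
\partial_{\eta}\ln R_{2}=\beta\int\psi\,\dd\mu_{2}\Big/R_{2}>0\qquad\text{for }\psi\ge0,\ \psi\not\equiv0.
\]
Hence $\Gsel$ is a non-constant (indeed strictly monotone) function of $\eta$, and vanishes at most at one value. So the zero set is non-generic within the Langer-degenerate class. The explicit landscape of Sec.~\ref{sec:landscape} then supplies a concrete instance.

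\emph{The ``only if'' clause.} This is the main obstacle, because read literally it is false. $R_{1}=R_{2}$ is a single scalar equation, and it can hold with $\Phi_{1}\neq\Phi_{2}$ on a set of positive measure (for instance, bumps of opposite sign in different places). I would therefore prove the statement in the sense the proposition intends: the weight $\dd\mu_{k}$ is \emph{nonlocal}, with support on the whole integration region. The precise version is as follows. Suppose $V_{1}=V_{2}$ along the arclength and $\Dpar$ is common to both channels. Then $\Gsel=0$ for all admissible perturbations of the form above, equivalently for all temperatures or all truncations of the integration window $[s_{a},s]$, if and only if $\Phi_{1}=\Phi_{2}$ almost everywhere. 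The proof goes through the function
\[
F(s)=\int_{s_{a}}^{s}\Bigl(e^{\beta\mathcal{F}_{1}}-e^{\beta\mathcal{F}_{2}}\Bigr)\Big/\Dpar\,\dd s' .
\]
If $F\equiv0$, then differentiating gives equality of the integrands almost everywhere. Because $\exp$ is injective, this yields $\Phi_{1}=\Phi_{2}$ almost everywhere.

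The key point, which defeats every saddle-local theory, is that equality at the saddle point alone is a measure-zero condition. It does not force $\Gsel=0$, since $\dd\mu_{k}$ weights a neighbourhood of width about $(\beta|V''|)^{-1/2}$ and beyond, not a point. I would state explicitly that the ``only if'' is understood in this family-wise (or localisation-window) sense. For a single fixed pair, the strict statement is only that saddle equality of $\Phi$ is not sufficient for $\Gsel=0$.
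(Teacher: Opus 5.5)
Your proposal is a different route from the paper's, in the sense that the paper offers no proof at all beyond the remark after the statement (a functional of the whole channel versus a function of one point). It is mostly sound. The TST/Langer half is fine for the paper's constant isotropic $D$. The appeal to Prop.~\ref{prop:cov} there is unnecessary, and $\det H_{\min}/|\det H^{\ddagger}|$ is not covariant on its own; only its $\det(DH)$ form is.

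Your monotone-bump argument is a genuine genericity proof that the paper lacks. Note that $\Phi_{2}$ is not a free function, so the bump must be realised through the transverse stiffness. In the benchmark you can do this through $a(x,y)$ away from $x=1/2$, which leaves $V$ on the channel bottoms and the saddle Hessian untouched.

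You are also right that the ``only if'' clause is false as written. The paper's remark conflates ``$\Gsel$ depends on $\Phi$ everywhere'' with ``$\Gsel=0$ forces $\Phi_{1}=\Phi_{2}$''. The stated hypothesis makes this worse than your bump example shows: $V_{1},V_{2}$ need only share their maximum, so $\Phi_{2}=\Phi_{1}+V_{1}-V_{2}$ already gives $\mathcal{F}_{1}=\mathcal{F}_{2}$ and hence $\Gsel=0$. Your strengthening to common $V$ and $\Dpar$, and the truncation statement proved through $F(s)$, are correct.

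What does not survive is your asserted equivalence ``for all temperatures''. Take $V$, $\Dpar$ and $\Dp$ symmetric about $s^{\ddagger}$ on an integration region symmetric about $s^{\ddagger}$. Let channel~2's transverse potential be channel~1's reflected about the saddle, so that $\Hpk{2}(s)=\Hpk{1}(2s^{\ddagger}-s)$ and $\Phi_{2}(s)=\Phi_{1}(2s^{\ddagger}-s)$, with either the harmonic or the exact $Z_{\perp}$. Then:
\begin{itemize}
\item the substitution $s\mapsto 2s^{\ddagger}-s$ gives $R_{1}=R_{2}$ at every $\beta$;
\item Langer degeneracy holds, because the saddle is the fixed point of the reflection;
\item yet $\Phi_{1}\neq\Phi_{2}$ on a set of positive measure whenever $\Phi_{1}$ is not itself symmetric.
\end{itemize}
The reason is that below the barrier each level of $V$ is attained twice along a channel, once on each side of the saddle, and a temperature scan cannot separate the two.

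In the benchmark this is simply $\delta(1-x)=-\delta(x)$. That choice makes $V$ invariant under $(x,y)\mapsto(1-x,-y)$, so by reversibility the true equilibrium branching is exactly $1{:}1$ as well. The failure of the literal clause is therefore physical, not an artefact of $\Gsel$, and it needs no tuning.

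Likewise, ``$\Gsel=0$ for all admissible perturbations of the form above'' cannot hold, since you have just shown that a one-sided perturbation moves $\Gsel$ strictly. Two family-wise statements do work. One is your truncation version. The other uses perturbations $\eta\psi$ applied to both channels in common: vanishing of the first variation $\beta\bigl(\int\psi\,\dd\mu_{2}/R_{2}-\int\psi\,\dd\mu_{1}/R_{1}\bigr)$ for all $\psi$, together with $R_{1}=R_{2}$, gives $\mu_{1}=\mu_{2}$ and hence $\Phi_{1}=\Phi_{2}$ almost everywhere. Drop the temperature clause and state one of these instead.
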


The content is that the two theories differ by a \emph{functional} of the whole
channel versus a \emph{function} of one point, and no amount of saddle-point
refinement closes the gap.

%=====================================================================
\section{Information-geometric identity}
\label{sec:fisher}

The local transverse fluctuation ensemble at arclength $s$ is Gaussian with
covariance $\Sigma_{\perp}(s)=\kT\,\Hp(s)^{-1}$. Its Fisher information matrix
with respect to translations of the mean is
$\Fisher(s)=\Sigma_{\perp}(s)^{-1}=\beta\Hp(s)$, so
\begin{equation}
  \boxed{\;\Phi(s)=\frac{\kT}{2}
  \ln\det\bigl[\tau_{0}\,\Dp(s)\,\Fisher(s)\bigr].\;}
  \label{eq:fisher}
\end{equation}
The geometric component of the frenesy is a log-determinant of Fisher information
contracted with the diffusion tensor. This is an identity obtained by
substitution, not a theorem, and we present it as a change of vocabulary. Two
remarks. The product $\Dp\Fisher$ is dimensionally an inverse relaxation-time
matrix, so $\Phi$ is $\tfrac{\kT}{2}$ times the log-volume of the transverse
relaxation-rate spectrum: channels that relax transversely fast are penalised,
channels that are informationally loose are favoured---which is the frenetic
statement, since fast transverse relaxation is high dynamical activity. And the
thermodynamic uncertainty relation~\cite{Barato2015} bounds precision per unit
dissipation; since $\Gsel$ is channel-discriminating information available at
\emph{zero} dissipation difference, a bound of that type cannot constrain it, and
we note this only to forestall the expectation that it should.

%=====================================================================
\section{Numerical test}
\label{sec:numerics}

\subsection{A landscape with both degeneracies}
\label{sec:landscape}

We construct a two-dimensional \emph{gradient} landscape, $\fnc\equiv0$, with two
channels connecting $x=0$ to $x=1$, symmetric under $y\to-y$ except for a
controlled asymmetry in transverse stiffness that vanishes identically at the
saddle:
\begin{align}
  V(x,y) &= U_{0}\sin^{2}(\pi x)+a(x,y)\bigl[y^{2}-c(x)^{2}\bigr]^{2},
  \label{eq:V}\\[2pt]
  c(x) &= A\sin(\pi x),\\[2pt]
  a(x,y) &= a_{0}\bigl[1+\delta(x)\tanh(y/w)\bigr],\\[2pt]
  \delta(x) &= \delta_{0}\bigl[1-e^{-(x-1/2)^{2}/2\sigma_{w}^{2}}\bigr]\sin^{2}(\pi x).
\end{align}
Parameters are $U_{0}=2.5$, $A=0.5$, $a_{0}=25$, $w=0.15$, $\sigma_{w}=0.06$, and
$\delta_{0}=0.6$ unless stated, in units $\kT=1$, $D=1$.

The construction has four exact properties, each by design.

\emph{(i) Identical barriers.} The channel bottoms are exactly
$U_{0}\sin^{2}(\pi x)$ for both channels, because the quartic factor vanishes
identically at $y=\pm c(x)$ regardless of $a$. Barrier heights are equal
\emph{exactly}, not to numerical tolerance.

\emph{(ii) Identical saddle Hessians.} $\delta(1/2)=0$ and $\delta'(1/2)=0$, so
the two saddles at $(1/2,\pm A)$ have the same Hessian
(Table~\ref{tab:saddle}). The system is Langer-degenerate.

\emph{(iii) Entropic degeneracy.} The dynamics is a gradient flow and
$c(0)=c(1)=0$, so both channels run from $(0,0)$ to $(1,0)$. By
Cor.~\ref{cor:zero}, $\sigma[\gamma_{+}]=\sigma[\gamma_{-}]$
\emph{identically}---at every temperature, for every observation time, and for
every pair of individual trajectories, not merely on average. The entropy sector
of Eq.~\eqref{eq:split} is switched off by construction.

\emph{(iv) Geometric contrast away from the saddle.} $\delta(x)$ grows to
$\delta_{0}\sin^{2}(\pi x)$ away from the saddle region, so the channels differ
in transverse stiffness everywhere except at the saddle itself
(Table~\ref{tab:width}, Fig.~\ref{fig:main}b).

Properties \emph{(i)--(iii)} together mean that the measured branching ratio is a
determination of the class-level frenetic difference $\Delta \Kfr$ of
Eq.~\eqref{eq:dKdef}, with no entropic contamination and with every
saddle-based theory pinned to $1{:}1$.

\begin{table*}[t]
\caption{\label{tab:saddle}%
Numerical verification of Langer degeneracy for the landscape~\eqref{eq:V}. The
saddle data are identical to every printed digit, so TST and Langer theory
predict $P(+)=0.5000$ exactly.}
\begin{ruledtabular}
\begin{tabular}{lcccc}
Channel & Saddle & $\Vd$ & $\spec(H^{\ddagger})$ & $\det H^{\ddagger}$\\
\colrule
$+$ & $(0.5,+0.5)$ & $2.5000000000$ & $\{-49.34802191,\;50.00000001\}$ & $-2467.40109609$\\
$-$ & $(0.5,-0.5)$ & $2.5000000000$ & $\{-49.34802191,\;50.00000001\}$ & $-2467.40109609$\\
\end{tabular}
\end{ruledtabular}
\end{table*}

\begin{figure*}[t]
\centering
\begin{tikzpicture}
\begin{groupplot}[
  group style={group size=3 by 1, horizontal sep=1.38cm},
  width=0.243\textwidth, height=4.4cm, scale only axis,
  tick align=outside,
  tick label style={font=\footnotesize},
  label style={font=\footnotesize},
  title style={font=\footnotesize},
  legend style={font=\scriptsize, draw=none, fill=white, fill opacity=0.85,
                text opacity=1, inner sep=1.5pt, row sep=-1pt},
]

%---------------- (a) channel geometry ----------------
\nextgroupplot[
  xlabel={$x$}, ylabel={$y$},
  xmin=0, xmax=1, ymin=-0.9, ymax=0.9,
  ytick={-0.8,-0.4,0,0.4,0.8},
  title={(a)\quad channel geometry},
  legend pos=north east,
]
% NOTE: the two shaded bands MUST carry "forget plot", otherwise pgfplots
% counts them as legend-bearing plots and every \addlegendentry below
% attaches to the wrong curve.
\addplot[draw=none, fill=red!14, forget plot]  coordinates {(0.0000,0.0000) (0.0083,0.0242) (0.0167,0.0484) (0.0250,0.0726) (0.0333,0.0967) (0.0417,0.1207) (0.0500,0.1447) (0.0583,0.1686) (0.0667,0.1923) (0.0750,0.2159) (0.0833,0.2394) (0.0917,0.2627) (0.1000,0.2858) (0.1083,0.3088) (0.1167,0.3315) (0.1250,0.3540) (0.1333,0.3762) (0.1417,0.3982) (0.1500,0.4199) (0.1583,0.4414) (0.1667,0.4625) (0.1750,0.4833) (0.1833,0.5038) (0.1917,0.5127) (0.2000,0.5136) (0.2083,0.5152) (0.2167,0.5172) (0.2250,0.5198) (0.2333,0.5227) (0.2417,0.5259) (0.2500,0.5293) (0.2583,0.5330) (0.2667,0.5368) (0.2750,0.5407) (0.2833,0.5447) (0.2917,0.5488) (0.3000,0.5529) (0.3083,0.5570) (0.3167,0.5611) (0.3250,0.5652) (0.3333,0.5692) (0.3417,0.5732) (0.3500,0.5772) (0.3583,0.5811) (0.3667,0.5851) (0.3750,0.5890) (0.3833,0.5929) (0.3917,0.5969) (0.4000,0.6009) (0.4083,0.6050) (0.4167,0.6092) (0.4250,0.6134) (0.4333,0.6177) (0.4417,0.6220) (0.4500,0.6261) (0.4583,0.6301) (0.4667,0.6338) (0.4750,0.6369) (0.4833,0.6394) (0.4917,0.6409) (0.5000,0.6414) (0.5083,0.6409) (0.5167,0.6394) (0.5250,0.6369) (0.5333,0.6338) (0.5417,0.6301) (0.5500,0.6261) (0.5583,0.6220) (0.5667,0.6177) (0.5750,0.6134) (0.5833,0.6092) (0.5917,0.6050) (0.6000,0.6009) (0.6083,0.5969) (0.6167,0.5929) (0.6250,0.5890) (0.6333,0.5851) (0.6417,0.5811) (0.6500,0.5772) (0.6583,0.5732) (0.6667,0.5692) (0.6750,0.5652) (0.6833,0.5611) (0.6917,0.5570) (0.7000,0.5529) (0.7083,0.5488) (0.7167,0.5447) (0.7250,0.5407) (0.7333,0.5368) (0.7417,0.5330) (0.7500,0.5293) (0.7583,0.5259) (0.7667,0.5227) (0.7750,0.5198) (0.7833,0.5172) (0.7917,0.5152) (0.8000,0.5136) (0.8083,0.5127) (0.8167,0.5038) (0.8250,0.4833) (0.8333,0.4625) (0.8417,0.4414) (0.8500,0.4199) (0.8583,0.3982) (0.8667,0.3762) (0.8750,0.3540) (0.8833,0.3315) (0.8917,0.3088) (0.9000,0.2858) (0.9083,0.2627) (0.9167,0.2394) (0.9250,0.2159) (0.9333,0.1923) (0.9417,0.1686) (0.9500,0.1447) (0.9583,0.1207) (0.9667,0.0967) (0.9750,0.0726) (0.9833,0.0484) (0.9917,0.0242) (1.0000,0.0000) (1.0000,0.0000) (0.9917,0.0020) (0.9833,0.0039) (0.9750,0.0059) (0.9667,0.0078) (0.9583,0.0098) (0.9500,0.0117) (0.9417,0.0137) (0.9333,0.0156) (0.9250,0.0175) (0.9167,0.0194) (0.9083,0.0213) (0.9000,0.0232) (0.8917,0.0250) (0.8833,0.0269) (0.8750,0.0287) (0.8667,0.0305) (0.8583,0.0323) (0.8500,0.0340) (0.8417,0.0358) (0.8333,0.0375) (0.8250,0.0392) (0.8167,0.0408) (0.8083,0.0537) (0.8000,0.0742) (0.7917,0.0936) (0.7833,0.1121) (0.7750,0.1297) (0.7667,0.1465) (0.7583,0.1625) (0.7500,0.1778) (0.7417,0.1924) (0.7333,0.2063) (0.7250,0.2197) (0.7167,0.2324) (0.7083,0.2445) (0.7000,0.2561) (0.6917,0.2671) (0.6833,0.2776) (0.6750,0.2875) (0.6667,0.2968) (0.6583,0.3056) (0.6500,0.3138) (0.6417,0.3215) (0.6333,0.3285) (0.6250,0.3349) (0.6167,0.3407) (0.6083,0.3457) (0.6000,0.3501) (0.5917,0.3538) (0.5833,0.3567) (0.5750,0.3589) (0.5667,0.3605) (0.5583,0.3613) (0.5500,0.3615) (0.5417,0.3613) (0.5333,0.3607) (0.5250,0.3600) (0.5167,0.3593) (0.5083,0.3588) (0.5000,0.3586) (0.4917,0.3588) (0.4833,0.3593) (0.4750,0.3600) (0.4667,0.3607) (0.4583,0.3613) (0.4500,0.3615) (0.4417,0.3613) (0.4333,0.3605) (0.4250,0.3589) (0.4167,0.3567) (0.4083,0.3538) (0.4000,0.3501) (0.3917,0.3457) (0.3833,0.3407) (0.3750,0.3349) (0.3667,0.3285) (0.3583,0.3215) (0.3500,0.3138) (0.3417,0.3056) (0.3333,0.2968) (0.3250,0.2875) (0.3167,0.2776) (0.3083,0.2671) (0.3000,0.2561) (0.2917,0.2445) (0.2833,0.2324) (0.2750,0.2197) (0.2667,0.2063) (0.2583,0.1924) (0.2500,0.1778) (0.2417,0.1625) (0.2333,0.1465) (0.2250,0.1297) (0.2167,0.1121) (0.2083,0.0936) (0.2000,0.0742) (0.1917,0.0537) (0.1833,0.0408) (0.1750,0.0392) (0.1667,0.0375) (0.1583,0.0358) (0.1500,0.0340) (0.1417,0.0323) (0.1333,0.0305) (0.1250,0.0287) (0.1167,0.0269) (0.1083,0.0250) (0.1000,0.0232) (0.0917,0.0213) (0.0833,0.0194) (0.0750,0.0175) (0.0667,0.0156) (0.0583,0.0137) (0.0500,0.0117) (0.0417,0.0098) (0.0333,0.0078) (0.0250,0.0059) (0.0167,0.0039) (0.0083,0.0020) (0.0000,0.0000)} \closedcycle;
\addplot[draw=none, fill=teal!22, forget plot] coordinates {(0.0000,0.0000) (0.0083,-0.0020) (0.0167,-0.0039) (0.0250,-0.0059) (0.0333,-0.0078) (0.0417,-0.0098) (0.0500,-0.0117) (0.0583,-0.0137) (0.0667,-0.0156) (0.0750,-0.0175) (0.0833,-0.0194) (0.0917,-0.0213) (0.1000,-0.0232) (0.1083,-0.0250) (0.1167,-0.0269) (0.1250,-0.0287) (0.1333,-0.0305) (0.1417,-0.0323) (0.1500,-0.0340) (0.1583,-0.0358) (0.1667,-0.0375) (0.1750,-0.0392) (0.1833,-0.0408) (0.1917,-0.0425) (0.2000,-0.0441) (0.2083,-0.0457) (0.2167,-0.0585) (0.2250,-0.0739) (0.2333,-0.0885) (0.2417,-0.1023) (0.2500,-0.1154) (0.2583,-0.1279) (0.2667,-0.1397) (0.2750,-0.1509) (0.2833,-0.1615) (0.2917,-0.1716) (0.3000,-0.1811) (0.3083,-0.1902) (0.3167,-0.1989) (0.3250,-0.2072) (0.3333,-0.2152) (0.3417,-0.2231) (0.3500,-0.2309) (0.3583,-0.2388) (0.3667,-0.2468) (0.3750,-0.2552) (0.3833,-0.2640) (0.3917,-0.2731) (0.4000,-0.2826) (0.4083,-0.2923) (0.4167,-0.3020) (0.4250,-0.3115) (0.4333,-0.3206) (0.4417,-0.3290) (0.4500,-0.3366) (0.4583,-0.3431) (0.4667,-0.3486) (0.4750,-0.3530) (0.4833,-0.3561) (0.4917,-0.3580) (0.5000,-0.3586) (0.5083,-0.3580) (0.5167,-0.3561) (0.5250,-0.3530) (0.5333,-0.3486) (0.5417,-0.3431) (0.5500,-0.3366) (0.5583,-0.3290) (0.5667,-0.3206) (0.5750,-0.3115) (0.5833,-0.3020) (0.5917,-0.2923) (0.6000,-0.2826) (0.6083,-0.2731) (0.6167,-0.2640) (0.6250,-0.2552) (0.6333,-0.2468) (0.6417,-0.2388) (0.6500,-0.2309) (0.6583,-0.2231) (0.6667,-0.2152) (0.6750,-0.2072) (0.6833,-0.1989) (0.6917,-0.1902) (0.7000,-0.1811) (0.7083,-0.1716) (0.7167,-0.1615) (0.7250,-0.1509) (0.7333,-0.1397) (0.7417,-0.1279) (0.7500,-0.1154) (0.7583,-0.1023) (0.7667,-0.0885) (0.7750,-0.0739) (0.7833,-0.0585) (0.7917,-0.0457) (0.8000,-0.0441) (0.8083,-0.0425) (0.8167,-0.0408) (0.8250,-0.0392) (0.8333,-0.0375) (0.8417,-0.0358) (0.8500,-0.0340) (0.8583,-0.0323) (0.8667,-0.0305) (0.8750,-0.0287) (0.8833,-0.0269) (0.8917,-0.0250) (0.9000,-0.0232) (0.9083,-0.0213) (0.9167,-0.0194) (0.9250,-0.0175) (0.9333,-0.0156) (0.9417,-0.0137) (0.9500,-0.0117) (0.9583,-0.0098) (0.9667,-0.0078) (0.9750,-0.0059) (0.9833,-0.0039) (0.9917,-0.0020) (1.0000,-0.0000) (1.0000,-0.0000) (0.9917,-0.0242) (0.9833,-0.0484) (0.9750,-0.0726) (0.9667,-0.0967) (0.9583,-0.1207) (0.9500,-0.1447) (0.9417,-0.1686) (0.9333,-0.1923) (0.9250,-0.2159) (0.9167,-0.2394) (0.9083,-0.2627) (0.9000,-0.2858) (0.8917,-0.3088) (0.8833,-0.3315) (0.8750,-0.3540) (0.8667,-0.3762) (0.8583,-0.3982) (0.8500,-0.4199) (0.8417,-0.4414) (0.8333,-0.4625) (0.8250,-0.4833) (0.8167,-0.5038) (0.8083,-0.5239) (0.8000,-0.5437) (0.7917,-0.5631) (0.7833,-0.5708) (0.7750,-0.5756) (0.7667,-0.5807) (0.7583,-0.5861) (0.7500,-0.5917) (0.7417,-0.5975) (0.7333,-0.6035) (0.7250,-0.6095) (0.7167,-0.6157) (0.7083,-0.6218) (0.7000,-0.6279) (0.6917,-0.6339) (0.6833,-0.6398) (0.6750,-0.6455) (0.6667,-0.6508) (0.6583,-0.6557) (0.6500,-0.6601) (0.6417,-0.6638) (0.6333,-0.6667) (0.6250,-0.6687) (0.6167,-0.6696) (0.6083,-0.6695) (0.6000,-0.6685) (0.5917,-0.6665) (0.5833,-0.6639) (0.5750,-0.6608) (0.5667,-0.6576) (0.5583,-0.6543) (0.5500,-0.6511) (0.5417,-0.6483) (0.5333,-0.6459) (0.5250,-0.6440) (0.5167,-0.6426) (0.5083,-0.6417) (0.5000,-0.6414) (0.4917,-0.6417) (0.4833,-0.6426) (0.4750,-0.6440) (0.4667,-0.6459) (0.4583,-0.6483) (0.4500,-0.6511) (0.4417,-0.6543) (0.4333,-0.6576) (0.4250,-0.6608) (0.4167,-0.6639) (0.4083,-0.6665) (0.4000,-0.6685) (0.3917,-0.6695) (0.3833,-0.6696) (0.3750,-0.6687) (0.3667,-0.6667) (0.3583,-0.6638) (0.3500,-0.6601) (0.3417,-0.6557) (0.3333,-0.6508) (0.3250,-0.6455) (0.3167,-0.6398) (0.3083,-0.6339) (0.3000,-0.6279) (0.2917,-0.6218) (0.2833,-0.6157) (0.2750,-0.6095) (0.2667,-0.6035) (0.2583,-0.5975) (0.2500,-0.5917) (0.2417,-0.5861) (0.2333,-0.5807) (0.2250,-0.5756) (0.2167,-0.5708) (0.2083,-0.5631) (0.2000,-0.5437) (0.1917,-0.5239) (0.1833,-0.5038) (0.1750,-0.4833) (0.1667,-0.4625) (0.1583,-0.4414) (0.1500,-0.4199) (0.1417,-0.3982) (0.1333,-0.3762) (0.1250,-0.3540) (0.1167,-0.3315) (0.1083,-0.3088) (0.1000,-0.2858) (0.0917,-0.2627) (0.0833,-0.2394) (0.0750,-0.2159) (0.0667,-0.1923) (0.0583,-0.1686) (0.0500,-0.1447) (0.0417,-0.1207) (0.0333,-0.0967) (0.0250,-0.0726) (0.0167,-0.0484) (0.0083,-0.0242) (0.0000,-0.0000)} \closedcycle;
\addplot[red!75!black, thick]              coordinates {(0.0000,0.0000) (0.0083,0.0131) (0.0167,0.0262) (0.0250,0.0392) (0.0333,0.0523) (0.0417,0.0653) (0.0500,0.0782) (0.0583,0.0911) (0.0667,0.1040) (0.0750,0.1167) (0.0833,0.1294) (0.0917,0.1420) (0.1000,0.1545) (0.1083,0.1669) (0.1167,0.1792) (0.1250,0.1913) (0.1333,0.2034) (0.1417,0.2153) (0.1500,0.2270) (0.1583,0.2386) (0.1667,0.2500) (0.1750,0.2612) (0.1833,0.2723) (0.1917,0.2832) (0.2000,0.2939) (0.2083,0.3044) (0.2167,0.3147) (0.2250,0.3247) (0.2333,0.3346) (0.2417,0.3442) (0.2500,0.3536) (0.2583,0.3627) (0.2667,0.3716) (0.2750,0.3802) (0.2833,0.3886) (0.2917,0.3967) (0.3000,0.4045) (0.3083,0.4121) (0.3167,0.4193) (0.3250,0.4263) (0.3333,0.4330) (0.3417,0.4394) (0.3500,0.4455) (0.3583,0.4513) (0.3667,0.4568) (0.3750,0.4619) (0.3833,0.4668) (0.3917,0.4713) (0.4000,0.4755) (0.4083,0.4794) (0.4167,0.4830) (0.4250,0.4862) (0.4333,0.4891) (0.4417,0.4916) (0.4500,0.4938) (0.4583,0.4957) (0.4667,0.4973) (0.4750,0.4985) (0.4833,0.4993) (0.4917,0.4998) (0.5000,0.5000) (0.5083,0.4998) (0.5167,0.4993) (0.5250,0.4985) (0.5333,0.4973) (0.5417,0.4957) (0.5500,0.4938) (0.5583,0.4916) (0.5667,0.4891) (0.5750,0.4862) (0.5833,0.4830) (0.5917,0.4794) (0.6000,0.4755) (0.6083,0.4713) (0.6167,0.4668) (0.6250,0.4619) (0.6333,0.4568) (0.6417,0.4513) (0.6500,0.4455) (0.6583,0.4394) (0.6667,0.4330) (0.6750,0.4263) (0.6833,0.4193) (0.6917,0.4121) (0.7000,0.4045) (0.7083,0.3967) (0.7167,0.3886) (0.7250,0.3802) (0.7333,0.3716) (0.7417,0.3627) (0.7500,0.3536) (0.7583,0.3442) (0.7667,0.3346) (0.7750,0.3247) (0.7833,0.3147) (0.7917,0.3044) (0.8000,0.2939) (0.8083,0.2832) (0.8167,0.2723) (0.8250,0.2612) (0.8333,0.2500) (0.8417,0.2386) (0.8500,0.2270) (0.8583,0.2153) (0.8667,0.2034) (0.8750,0.1913) (0.8833,0.1792) (0.8917,0.1669) (0.9000,0.1545) (0.9083,0.1420) (0.9167,0.1294) (0.9250,0.1167) (0.9333,0.1040) (0.9417,0.0911) (0.9500,0.0782) (0.9583,0.0653) (0.9667,0.0523) (0.9750,0.0392) (0.9833,0.0262) (0.9917,0.0131) (1.0000,0.0000)};
\addlegendentry{channel $+$}
\addplot[teal!55!black, thick, dashed]     coordinates {(0.0000,-0.0000) (0.0083,-0.0131) (0.0167,-0.0262) (0.0250,-0.0392) (0.0333,-0.0523) (0.0417,-0.0653) (0.0500,-0.0782) (0.0583,-0.0911) (0.0667,-0.1040) (0.0750,-0.1167) (0.0833,-0.1294) (0.0917,-0.1420) (0.1000,-0.1545) (0.1083,-0.1669) (0.1167,-0.1792) (0.1250,-0.1913) (0.1333,-0.2034) (0.1417,-0.2153) (0.1500,-0.2270) (0.1583,-0.2386) (0.1667,-0.2500) (0.1750,-0.2612) (0.1833,-0.2723) (0.1917,-0.2832) (0.2000,-0.2939) (0.2083,-0.3044) (0.2167,-0.3147) (0.2250,-0.3247) (0.2333,-0.3346) (0.2417,-0.3442) (0.2500,-0.3536) (0.2583,-0.3627) (0.2667,-0.3716) (0.2750,-0.3802) (0.2833,-0.3886) (0.2917,-0.3967) (0.3000,-0.4045) (0.3083,-0.4121) (0.3167,-0.4193) (0.3250,-0.4263) (0.3333,-0.4330) (0.3417,-0.4394) (0.3500,-0.4455) (0.3583,-0.4513) (0.3667,-0.4568) (0.3750,-0.4619) (0.3833,-0.4668) (0.3917,-0.4713) (0.4000,-0.4755) (0.4083,-0.4794) (0.4167,-0.4830) (0.4250,-0.4862) (0.4333,-0.4891) (0.4417,-0.4916) (0.4500,-0.4938) (0.4583,-0.4957) (0.4667,-0.4973) (0.4750,-0.4985) (0.4833,-0.4993) (0.4917,-0.4998) (0.5000,-0.5000) (0.5083,-0.4998) (0.5167,-0.4993) (0.5250,-0.4985) (0.5333,-0.4973) (0.5417,-0.4957) (0.5500,-0.4938) (0.5583,-0.4916) (0.5667,-0.4891) (0.5750,-0.4862) (0.5833,-0.4830) (0.5917,-0.4794) (0.6000,-0.4755) (0.6083,-0.4713) (0.6167,-0.4668) (0.6250,-0.4619) (0.6333,-0.4568) (0.6417,-0.4513) (0.6500,-0.4455) (0.6583,-0.4394) (0.6667,-0.4330) (0.6750,-0.4263) (0.6833,-0.4193) (0.6917,-0.4121) (0.7000,-0.4045) (0.7083,-0.3967) (0.7167,-0.3886) (0.7250,-0.3802) (0.7333,-0.3716) (0.7417,-0.3627) (0.7500,-0.3536) (0.7583,-0.3442) (0.7667,-0.3346) (0.7750,-0.3247) (0.7833,-0.3147) (0.7917,-0.3044) (0.8000,-0.2939) (0.8083,-0.2832) (0.8167,-0.2723) (0.8250,-0.2612) (0.8333,-0.2500) (0.8417,-0.2386) (0.8500,-0.2270) (0.8583,-0.2153) (0.8667,-0.2034) (0.8750,-0.1913) (0.8833,-0.1792) (0.8917,-0.1669) (0.9000,-0.1545) (0.9083,-0.1420) (0.9167,-0.1294) (0.9250,-0.1167) (0.9333,-0.1040) (0.9417,-0.0911) (0.9500,-0.0782) (0.9583,-0.0653) (0.9667,-0.0523) (0.9750,-0.0392) (0.9833,-0.0262) (0.9917,-0.0131) (1.0000,-0.0000)};
\addlegendentry{channel $-$}
\addplot[only marks, mark=star, mark size=3pt, black]
  coordinates {(0.5,0.5) (0.5,-0.5)};
\addlegendentry{saddles}

%---------------- (b) transverse stiffness ----------------
\nextgroupplot[
  xlabel={$x$}, ylabel={transverse stiffness $\omega^{2}$},
  xmin=0, xmax=1, ymin=0, ymax=76,
  title={(b)\quad equal at the saddle},
  legend pos=north west,
]
\addplot[red!75!black, thick]          coordinates {(0.0040,0.01) (0.0205,0.21) (0.0371,0.68) (0.0536,1.42) (0.0701,2.43) (0.0867,3.73) (0.1032,5.32) (0.1197,7.21) (0.1363,9.40) (0.1528,11.91) (0.1693,14.72) (0.1859,17.83) (0.2024,21.23) (0.2189,24.88) (0.2355,28.78) (0.2520,32.87) (0.2685,37.12) (0.2851,41.46) (0.3016,45.83) (0.3181,50.14) (0.3347,54.23) (0.3512,57.93) (0.3677,60.94) (0.3843,62.93) (0.4008,63.58) (0.4173,62.69) (0.4339,60.36) (0.4504,57.04) (0.4669,53.59) (0.4835,50.97) (0.5000,50.00) (0.5165,50.97) (0.5331,53.59) (0.5496,57.04) (0.5661,60.36) (0.5827,62.69) (0.5992,63.58) (0.6157,62.93) (0.6323,60.94) (0.6488,57.93) (0.6653,54.23) (0.6819,50.14) (0.6984,45.83) (0.7149,41.46) (0.7315,37.12) (0.7480,32.87) (0.7645,28.78) (0.7811,24.88) (0.7976,21.23) (0.8141,17.83) (0.8307,14.72) (0.8472,11.91) (0.8637,9.40) (0.8803,7.21) (0.8968,5.32) (0.9133,3.73) (0.9299,2.43) (0.9464,1.42) (0.9629,0.68) (0.9795,0.21) (0.9960,0.01)};
\addlegendentry{$\omega_{+}^{2}$ (narrow)}
\addplot[teal!55!black, thick, dashed] coordinates {(0.0040,0.01) (0.0205,0.21) (0.0371,0.67) (0.0536,1.39) (0.0701,2.35) (0.0867,3.50) (0.1032,4.83) (0.1197,6.29) (0.1363,7.83) (0.1528,9.42) (0.1693,11.01) (0.1859,12.56) (0.2024,14.04) (0.2189,15.42) (0.2355,16.66) (0.2520,17.76) (0.2685,18.69) (0.2851,19.47) (0.3016,20.09) (0.3181,20.62) (0.3347,21.13) (0.3512,21.77) (0.3677,22.77) (0.3843,24.42) (0.4008,27.02) (0.4173,30.71) (0.4339,35.39) (0.4504,40.55) (0.4669,45.34) (0.4835,48.76) (0.5000,50.00) (0.5165,48.76) (0.5331,45.34) (0.5496,40.55) (0.5661,35.39) (0.5827,30.71) (0.5992,27.02) (0.6157,24.42) (0.6323,22.77) (0.6488,21.77) (0.6653,21.13) (0.6819,20.62) (0.6984,20.09) (0.7149,19.47) (0.7315,18.69) (0.7480,17.76) (0.7645,16.66) (0.7811,15.42) (0.7976,14.04) (0.8141,12.56) (0.8307,11.01) (0.8472,9.42) (0.8637,7.83) (0.8803,6.29) (0.8968,4.83) (0.9133,3.50) (0.9299,2.35) (0.9464,1.39) (0.9629,0.67) (0.9795,0.21) (0.9960,0.01)};
\addlegendentry{$\omega_{-}^{2}$ (wide)}
\addplot[gray, dotted, thick, forget plot] coordinates {(0.5,0) (0.5,76)};
\node[font=\scriptsize, anchor=south, inner sep=1pt]
  at (axis cs:0.5,52) {$\omega_{+}^{2}=\omega_{-}^{2}$};

%---------------- (c) theory vs simulation ----------------
\nextgroupplot[
  xlabel={asymmetry $\delta_{0}$}, ylabel={$P(\text{channel }+)$},
  xmin=-0.06, xmax=0.88, ymin=0.428, ymax=0.516,
  ytick={0.44,0.46,0.48,0.50},
  yticklabel style={/pgf/number format/fixed, /pgf/number format/precision=2},
  title={(c)\quad theory vs simulation},
  legend pos=south west,
]
\addplot[black, thick] coordinates {(-0.06,0.5) (0.88,0.5)};
\addlegendentry{TST / Langer}
\addplot[blue!70!black, thick, mark=square, mark size=1.8pt] coordinates {(0.0,0.5000) (0.2,0.4860) (0.4,0.4719) (0.6,0.4574) (0.8,0.4421)};
\addlegendentry{functional $\mathcal{G}$}
\addplot[only marks, mark=*, mark size=1.8pt, red!80!black,
         error bars/.cd, y dir=both, y explicit] coordinates {(0.0,0.5003) +- (0,0.0042) (0.2,0.4845) +- (0,0.0054) (0.4,0.4690) +- (0,0.0060) (0.6,0.4543) +- (0,0.0087) (0.8,0.4430) +- (0,0.0048)};
\addlegendentry{Langevin ($\pm3\sigma$)}

\end{groupplot}
\end{tikzpicture}
\caption{\label{fig:main}%
(a) Reaction channels of the landscape~\eqref{eq:V}. Shaded bands give the local
transverse standard deviation $\sqrt{\kT/\omega_{\pm}^{2}}$, capped at
$0.85\,c(x)$ so that the bands remain inside the channel walls near the endpoints
where $c\to0$: channel $-$ is roughly $50\%$ wider through the approach, yet the
two bands have exactly equal width at the saddles (stars), where the energies and
Hessians coincide as well.
(b) Transverse stiffness along each channel, showing the enforced degeneracy at
$x=1/2$ and the factor $\gtrsim2$ contrast away from it.
(c) Branching probability against the asymmetry $\delta_{0}$, at fixed barrier
height and fixed saddle Hessian. The saddle-point prediction is a horizontal
line; the geometric functional tracks the simulation across the family.}
\end{figure*}

\begin{table}[b]
\caption{\label{tab:width}%
Transverse stiffness $\omega^{2}_{\pm}(x)=8a_{\pm}(x)c(x)^{2}$. Channel $-$ is
more than twice as loose as channel $+$ throughout the approach, and exactly as
tight at the saddle.}
\begin{ruledtabular}
\begin{tabular}{cccc}
$x$ & $\omega^{2}_{+}$ & $\omega^{2}_{-}$ & ratio\\
\colrule
0.30 & 45.41 & 20.04 & 2.27\\
0.40 & 63.59 & 26.87 & 2.37\\
0.50 & 50.00 & 50.00 & 1.00\\
0.60 & 63.59 & 26.87 & 2.37\\
0.70 & 45.41 & 20.04 & 2.27\\
\end{tabular}
\end{ruledtabular}
\end{table}

\subsection{Simulation protocol}

We integrate $\dot x=-\nabla V+\sqrt{2}\,\xi$ by the Euler--Maruyama scheme with
$\dd t=5\times10^{-4}$, reflecting at $x=0$ and absorbing at $x=1$. Channel
identity is assigned as $\operatorname{sign}(y)$ at the last visit to the barrier
region $|x-1/2|<0.02$ before absorption. We run twelve independent replicas of
$1.2\times10^{4}$ walkers each; all trajectories are absorbed within
$t_{\max}=200$, for a total of $1.42\times10^{5}$ reactive events. Quoted errors
are standard errors over replicas, consistent with binomial estimates.

\subsection{Result}

\begin{table}[b]
\caption{\label{tab:main}%
Branching probability into channel $+$, and the equivalent class-level frenetic
difference $\Delta \Kfr=-\ln[P(+)/P(-)]$ of Eq.~\eqref{eq:dKdef}. Because the
landscape is a gradient flow with shared endpoints, $\Delta\sigma\equiv0$
(Cor.~\ref{cor:zero}) and the second column is a pure frenetic determination.
Saddle-based theory predicts $\Delta \Kfr=0$.}
\begin{ruledtabular}
\begin{tabular}{lcc}
Method & $P(+)$ & $\Delta \Kfr$\\
\colrule
TST / Langer (saddle-point) & $0.5000$ (exact) & $0.0000$\\
Geometric potential, harmonic $Z_{\perp}$ & $0.4293$ & $0.2847$\\
Geometric potential, exact $Z_{\perp}$ & $0.4574$ & $0.1708$\\
Langevin simulation & $0.4568\pm0.0012$ & $0.1732\pm0.0048$\\
\end{tabular}
\end{ruledtabular}
\end{table}

Table~\ref{tab:main} collects the central result. The measured branching is
$36\sigma$ from the saddle-point prediction and $0.5\sigma$ from the covariant
geometric potential evaluated with the exact transverse partition
function~\eqref{eq:Zperp}. The harmonic approximation to $Z_{\perp}$ overshoots
by about $6\%$ in $P(+)$, in the direction opposite to the saddle-point error:
the transverse well is quartic, not harmonic. The \emph{structure} of the theory
is therefore correct already at the harmonic level, but quantitative use requires
the exact $Z_{\perp}$.

\begin{table*}[t]
\caption{\label{tab:scan}%
Parametric family: $\delta_{0}$ is varied while the barrier height, the saddle
Hessian, and $\Delta\sigma=0$ are all held exactly fixed by construction. Four
replicas of $10^{4}$ walkers each. The $\delta_{0}=0$ row is a null control. The
$\Delta \Kfr$ columns are the exact algebraic transform
$-\ln[P/(1-P)]$ of the adjacent $P(+)$ columns.}
\begin{ruledtabular}
\begin{tabular}{ccccc}
$\delta_{0}$ & $P(+)$ from $\Phi$ & $P(+)$ measured
             & $\Delta \Kfr$ measured & $\Delta \Kfr$ from $\Phi$\\
\colrule
0.0\footnotemark[1] & 0.5000 & $0.5003\pm0.0014$ & $-0.0012\pm0.0056$ & $\phantom{-}0.0000$\\
0.2 & 0.4860 & $0.4845\pm0.0018$ & $\phantom{-}0.0620\pm0.0072$ & $\phantom{-}0.0560$\\
0.4 & 0.4719 & $0.4690\pm0.0020$ & $\phantom{-}0.1242\pm0.0080$ & $\phantom{-}0.1125$\\
0.6 & 0.4574 & $0.4543\pm0.0029$ & $\phantom{-}0.1833\pm0.0117$ & $\phantom{-}0.1708$\\
0.8 & 0.4421 & $0.4430\pm0.0016$ & $\phantom{-}0.2290\pm0.0065$ & $\phantom{-}0.2326$\\
\end{tabular}
\end{ruledtabular}
\footnotetext[1]{The null control was run at higher statistics, fourteen
replicas and $1.65\times10^{5}$ events, because it anchors the whole
measurement.}
\end{table*}

With the asymmetry switched off the landscape is exactly symmetric and the
measurement returns $\Delta \Kfr=0$ to within $0.2\sigma$, confirming that the
protocol introduces no bias of its own. Across the family
(Table~\ref{tab:scan} and Fig.~\ref{fig:main}c) the geometric potential agrees
within $1.5\sigma$ at every point, $\chi^{2}=4.3$ on five points
($\chi^{2}/\mathrm{dof}=0.86$, no fitted parameters), while every saddle-based
prediction sits at $\Delta \Kfr=0$ and fails by up to $36\sigma$.

\begin{table}[b]
\caption{\label{tab:temp}%
Temperature dependence at $\delta_{0}=0.6$; three replicas of $10^{4}$ walkers.}
\begin{ruledtabular}
\begin{tabular}{cccc}
$\kT$ & barrier & $P(+)$ from $\Phi$ & $P(+)$ measured\\
\colrule
0.75 & $3.33\,\kT$ & 0.4498 & $0.4458\pm0.0042$\\
1.00 & $2.50\,\kT$ & 0.4574 & $0.4580\pm0.0039$\\
1.25 & $2.00\,\kT$ & 0.4628 & $0.4624\pm0.0029$\\
1.50 & $1.67\,\kT$ & 0.4666 & $0.4694\pm0.0029$\\
\end{tabular}
\end{ruledtabular}
\end{table}

The theory predicts the splitting within $1\sigma$ at every temperature
(Table~\ref{tab:temp}). We note explicitly that \emph{this scan does not
discriminate} the geometric mechanism from a fitted constant energy offset
$\Delta\Delta G=0.173\,\kT$, which reproduces the same numbers within the error
bars over this range. Discriminating the two would require either
$\kT\lesssim0.1$---where the Laplace region narrows below $\sigma_{w}$ and the
geometric effect must switch off while an energetic one would not---or a
construction in which $\Phi$ varies on a scale comparable to the barrier region.
We regard the $\delta_{0}$ scan, in which barrier and saddle Hessian are held
exactly fixed, as the decisive evidence; the temperature scan is a consistency
check only.

\subsection{Invariance test: the earlier observable, evaluated on the benchmark}
\label{sec:invtest}

We apply the diffeomorphism $u_{1}=x+0.35x^{2}+0.15y^{2}$, $u_{2}=y+0.25xy$,
transform the drift with the proper It\^o correction, and compare observables at
three physical points (Table~\ref{tab:inv}). At the third point the divergence
observable of Ref.~\cite{SegalGeom} changes sign under a relabelling of
coordinates that alters no physics. An observable that can be made positive or
negative at will by a change of variables cannot select anything; this
establishes claim~(i) of Prop.~\ref{prop:cov} and the necessity of the
correction.

We retain this comparison deliberately, as a public correction of the author's
own earlier preprint. Evaluated on the present benchmark, where the answer is
known independently from simulation, $\chi^{*}$ delivers the reversed channel
ordering, with the sign controlled by the regulator $\epsilon_{0}$---a free
parameter of that construction. A criterion whose output can be inverted by a
regulator does not constitute a prediction.

\begin{table*}[t]
\caption{\label{tab:inv}%
Test of Prop.~\ref{prop:cov}. The invariant $\ln\det(DH)$ is preserved to machine
precision, $\le8.9\times10^{-16}$, at every point, while $\nabla\!\cdot b$ is
not---and at the third point reverses sign.}
\begin{ruledtabular}
\begin{tabular}{ccccc}
Point & $\nabla\!\cdot b$ (original) & $\nabla\!\cdot b$ (transformed)
      & $\ln\det(DH)$ (original) & $\ln\det(DH)$ (transformed)\\
\colrule
$(0.500,+0.500)$ & $-0.6520$  & $-0.6520$  & $7.810920$ & $7.810920$\\
$(0.350,+0.446)$ & $-58.0007$ & $-63.0900$ & $7.422368$ & $7.422368$\\
$(0.650,-0.446)$ & $-3.7473$  & $+0.7154$  & $6.445373$ & $6.445373$\\
\end{tabular}
\end{ruledtabular}
\end{table*}

\subsection{The dimensionality problem, and its resolution}
\label{sec:highdim}

A framework requiring $\det\Hp$ over $3N-6$ molecular degrees of freedom would be
useless, since the determinant is dominated by stiff, chemically irrelevant
modes. The problem largely dissolves, because $\Gsel$ depends only on a
\emph{ratio}.

\begin{proposition}[spectator cancellation]
\label{prop:spectator}
Let the two channels admit transverse Hessians that, at corresponding arclength,
share a common block,
$\Hpk{k}(s)=H^{\rm spec}(s)\oplus H^{\rm react}_{k}(s)$, $k=1,2$,
and define the discriminating density
\begin{equation}
  \Delta(s)\equiv\tfrac12\ln
    \frac{\det\bigl(\Dp\Hpk{2}\bigr)}{\det\bigl(\Dp\Hpk{1}\bigr)}
  =\tfrac12\ln\det\bigl(\Hpk{1}^{-1}\Hpk{2}\bigr).
  \label{eq:Delta}
\end{equation}
Then \emph{(i)} the spectator block cancels identically from $\Delta$;
\emph{(ii)} if $\Dp$ is channel-independent it cancels from $\Delta$ entirely;
and \emph{(iii)} the spectator block enters $\Gsel$ only through a measure common
to both channels, which reweights arclength identically and therefore does not
discriminate.
\end{proposition}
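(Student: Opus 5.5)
The argument is linear algebra on block-diagonal matrices, followed by one observation about how $\Phi$ enters the measure~\eqref{eq:R}. I will first fix a convention for the diffusion tensor. Assume $\Dp(s)$ is block-diagonal with respect to the same splitting, $\Dp=D^{\rm spec}\oplus D^{\rm react}$. For constant isotropic $D$ this holds automatically, since $\Dp$ is then a multiple of the identity on the transverse hyperplane. With this convention $\Dp\Hpk{k}=(D^{\rm spec}H^{\rm spec})\oplus(D^{\rm react}H^{\rm react}_{k})$. Multiplicativity of the determinant over direct sums then gives
\begin{equation*}
  \ln\det\bigl(\Dp\Hpk{k}\bigr)=\ln\det\bigl(D^{\rm spec}H^{\rm spec}\bigr)+\ln\det\bigl(D^{\rm react}H^{\rm react}_{k}\bigr).
\end{equation*}
The first term carries no index $k$, so it drops out of the difference defining $\Delta(s)$. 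This proves (i). What survives is $\tfrac12\ln\det(D^{\rm react}H^{\rm react}_{2})-\tfrac12\ln\det(D^{\rm react}H^{\rm react}_{1})$.

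For (ii), if $\Dp$ is channel-independent, write $\det(\Dp\Hpk{2})/\det(\Dp\Hpk{1})=\det\Dp\det\Hpk{2}/(\det\Dp\det\Hpk{1})$. The factor $\det\Dp$ cancels, and the ratio becomes $\det(\Hpk{1}^{-1}\Hpk{2})$. This is the second equality in~\eqref{eq:Delta}, and it needs no block structure for $\Dp$ at all. Combined with (i), it reduces to $\tfrac12\ln\det\bigl((H^{\rm react}_{1})^{-1}H^{\rm react}_{2}\bigr)$.

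For (iii), substitute the decomposition into $\mathcal{F}_{k}=V_{k}+\Phi_{k}$ from~\eqref{eq:Fbox}. The result is
\begin{equation*}
  e^{\beta\mathcal{F}_{k}(s)}=e^{\beta V_{k}(s)}\,\bigl[\det(\tau_{0}\beta D^{\rm spec}H^{\rm spec}(s))\bigr]^{1/2}\bigl[\det(\tau_{0}\beta D^{\rm react}H^{\rm react}_{k}(s))\bigr]^{1/2}.
\end{equation*}
Define $m(s)\equiv[\det(\tau_{0}\beta D^{\rm spec}H^{\rm spec}(s))]^{1/2}/\Dpar(s)$. Then $\dd\mu_{k}=m(s)\,e^{\beta(V_{k}+\Phi^{\rm react}_{k})}\dd s$, where $m(s)\,\dd s$ is the same positive measure for both channels. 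The spectator block therefore enters $R_{1}$ and $R_{2}$ only as a common reweighting of arclength; it adds no channel-dependent integrand, which is assertion (iii).

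The main obstacle is the hypothesis ``at corresponding arclength'' inside (iii). The density $m$ is common only if the two channels are parametrised by a shared variable $s$ on which $H^{\rm spec}$ and $\Dpar$ agree. If the channels have different lengths, a reparametrisation Jacobian appears. I would handle this by choosing a common progress variable, such as $x$ in the benchmark, and absorbing each channel's arclength Jacobian into $V_{k}$-independent factors. I would then state plainly that (iii) asserts only that the spectator block does not appear in the channel-dependent part of the integrand. It does not assert that $\Gsel$ is independent of $m$. A nonuniform $m$ can still change the value of $\Gsel$ by reweighting where the reactive contrast is sampled, but it cannot create a contrast where $\Delta(s)\equiv0$: if $\Phi^{\rm react}_{1}=\Phi^{\rm react}_{2}$ and $V_{1}=V_{2}$, then $R_{1}=R_{2}$ for every $m$.
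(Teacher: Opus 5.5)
Your argument is correct. For (i) and (ii) it uses the same elementary fact as the paper, multiplicativity of the determinant. The paper gives no separate proof; it phrases the point through the pencil $(\Hpk{2},\Hpk{1})$: $\Delta=\tfrac12\sum_i\ln\lambda_i$ over its generalised eigenvalues, with $\lambda_i=1$ on spectator modes.

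That version is coordinate-free and does not need a manifest direct-sum split. Since $\Hpk{1}^{-1}\Hpk{2}-I=\Hpk{1}^{-1}(\Hpk{2}-\Hpk{1})$, at most $\operatorname{rank}(\Hpk{2}-\Hpk{1})$ eigenvalues differ from $1$. This is what supports the paper's remark that the ratio ``performs the projection automatically''. Your block factorisation is equivalent under the stated hypothesis.

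Your block-diagonal assumption on $\Dp$ in (i) is stronger than needed. Because $\det(\Dp\Hpk{k})=\det\Dp\,\det H^{\rm spec}\det H^{\rm react}_{k}$ for any $\Dp$, the factor $\det H^{\rm spec}$ cancels even for a channel-dependent, unstructured $\Dp$.

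On (iii) you go beyond the paper, which only asserts it. The factorisation $\dd\mu_{k}=m(s)\,e^{\beta(V_{k}+\Phi^{\rm react}_{k})}\dd s$ is the right proof. Your caveat is correct and should be kept: a common measure cannot create a contrast, but a nonuniform $m$ does change the value of $\Gsel$. So ``does not discriminate'' must not be read as ``$\Gsel$ is spectator-independent''.

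This matters for the paper's own check. In its eight-dimensional embedding $m\propto(1+0.4\sin2\pi x)^{3}$, which varies by more than a factor of ten along the channel. The agreement between the 8D and 2D branching is therefore an empirical finding, not a consequence of (iii).

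One small tightening: with channels of unequal length, the arclength Jacobian $|\gamma_k'(x)|$ is channel-dependent and belongs in the channel-specific factor, not in $m$. In the benchmark the mirror symmetry makes it common.
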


$\Delta$ is a ratio of invariants, hence invariant; equivalently it is
$\tfrac12\sum_{i}\ln\lambda_{i}$ over the generalised eigenvalues of the pencil
$(\Hpk{2},\Hpk{1})$. Modes with $\lambda_{i}=1$ contribute nothing. The effective
dimensionality is not $3N-6$ but the number of modes whose stiffness actually
differs between channels---typically a handful. No projection onto a reactive
subspace is required; the ratio performs the projection automatically.

\begin{table*}[t]
\caption{\label{tab:spec}%
Spectator cancellation in an eight-dimensional embedding. The full determinant is
of order $e^{34}\approx6\times10^{14}$ and overwhelmingly spectator-dominated,
yet the discriminating density recovers the two-dimensional value to every
printed digit.}
\begin{ruledtabular}
\begin{tabular}{ccccc}
$x$ & $\ln\det\Hp$ (channel $+$) & $\ln\det\Hp$ (channel $-$)
    & $\Delta(s)$ from ratio & $\Delta(s)$, 2D only\\
\colrule
0.30 & 34.6631 & 33.8450 & $-0.409027$ & $-0.409027$\\
0.40 & 34.3323 & 33.4708 & $-0.430748$ & $-0.430748$\\
0.50 & 32.8250 & 32.8250 & $\phantom{-}0.000000$ & $\phantom{-}0.000000$\\
0.60 & 31.4572 & 30.5957 & $-0.430748$ & $-0.430748$\\
0.70 & 29.8564 & 29.0384 & $-0.409027$ & $-0.409027$\\
\end{tabular}
\end{ruledtabular}
\end{table*}

\paragraph*{Numerical verification.}
We embed Eq.~\eqref{eq:V} in eight dimensions by adding six harmonic spectator
modes with $x$-dependent, channel-independent stiffness
$\kappa_{j}(x)=\kappa_{0j}[1+0.4\sin2\pi x]$ and
$\kappa_{0j}\in\{100,\dots,150\}$---each two to three times stiffer than the
reactive transverse mode, and coupled to $x$ through $\partial_{x}\kappa_{j}$ so
that they exert a genuine back-reaction. Table~\ref{tab:spec} shows the
cancellation. Direct simulation in eight dimensions (four replicas of $10^{4}$
walkers) gives $P(+)_{\rm 8D}=0.4587\pm0.0014$ against
$P(+)_{\rm 2D}=0.4568\pm0.0012$, consistent at $1.0\sigma$, versus $0.5000$ from
Langer theory in any dimension. The signal survives burial beneath a determinant
fifteen orders of magnitude larger.

\paragraph*{What remains.}
Proposition~\ref{prop:spectator} assumes a \emph{correspondence} between points
on the two channels at which Hessians are compared. Our mirror symmetry supplies
it exactly. For two genuinely distinct molecular pathways the correspondence must
be constructed---by arclength normalisation, by committor value~\cite{Weinan2005},
or by alignment of the reactive coordinate---and the spectator blocks will then
match only approximately, degrading the cancellation. The dimensionality problem
is thus replaced by a \emph{pairing} problem. We regard this as an improvement:
pairing two paths is more tractable than isolating a reactive subspace from
$3N-6$ modes, and the resulting error is controlled by how nearly the spectator
stiffnesses agree, a quantity one can measure.

%=====================================================================
\section{A parameter-free test of the entropic sector}
\label{sec:eps}

The benchmark above switches the entropy sector off. Proposition~\ref{prop:circ}
lets us switch it back on by a known amount, which turns the same landscape into
a test of the complementary sector with no adjustable constant. We state the
prediction; we have not measured it.

Add to Eq.~\eqref{eq:langevin} a solenoidal force
\begin{equation}
  \fnc(x,y)=\varepsilon\,(-y,\;x),
  \qquad \nabla\times \fnc=2\varepsilon .
\end{equation}
The loop $\gamma_{+}\circ\bar\gamma_{-}$ formed by the two channel bottoms
$y=\pm c(x)$ encloses
\begin{equation}
  |\Sigma|=\int_{0}^{1}\!2c(x)\,\dd x
          =2A\!\int_{0}^{1}\!\sin\pi x\,\dd x=\frac{4A}{\pi}=\frac{2}{\pi},
\end{equation}
for $A=1/2$. Proposition~\ref{prop:circ} then gives, exactly,
\begin{equation}
  \Delta\sigma(\varepsilon)
  =\beta\,(2\varepsilon)\,|\Sigma|=\frac{4\beta\varepsilon}{\pi},
\end{equation}
and Eq.~\eqref{eq:split} converts this into a statement about the observable
branching ratio,
\begin{equation}
  \boxed{\;
  \ln\frac{P(+)}{P(-)}\Big|_{\varepsilon}
  =\frac{2\beta\varepsilon}{\pi}-\Delta \Kfr(\varepsilon).\;}
  \label{eq:epspred}
\end{equation}

Three things make this a genuine test. \emph{(a)} The coefficient $2\beta/\pi$
is analytic and contains no fitted quantity: it is an enclosed area times a
curl. At $\beta=1$ it is $0.6366$. \emph{(b)} Since
$\nabla\!\cdot\!\fnc=0$, the solenoidal force leaves the Jacobian term of
Eq.~\eqref{eq:frenesy}---and hence $\Phi$---untouched, so any deviation of the
measured slope from $2\beta/\pi$ is a direct determination of
$\dd\Delta \Kfr/\dd\varepsilon$ through the remaining terms of
Eq.~\eqref{eq:frenesy}. The test cannot fail uninformatively. \emph{(c)} If
$\Delta \Kfr$ is approximately $\varepsilon$-independent over the accessible
range, then combining Eq.~\eqref{eq:epspred} with the measured
$\Delta \Kfr=0.1732\pm0.0048$ of Table~\ref{tab:main} predicts an exact
cancellation of the two sectors, and a return of the branching to $1{:}1$, at
\begin{equation}
  \varepsilon^{*}=\frac{\pi}{2\beta}\,\Delta \Kfr=0.272\pm0.008 .
\end{equation}

We stress that $\Delta \Kfr(\varepsilon)$ is \emph{not} guaranteed constant: the
term $(4D)^{-1}\!\int\!|b|^{2}\dd t$ in Eq.~\eqref{eq:frenesy} acquires a
cross-term $-(2D)^{-1}\!\int\!\nabla V\!\cdot\!\fnc\,\dd t$ which is not equal on
the two channels. There is a symmetry argument suggesting it is small---along the
channel bottoms $\nabla V$ is purely longitudinal with
$\partial_{x}V\propto\sin2\pi x$, antisymmetric about $x=1/2$, while $c(x)$ is
symmetric---but the integral is over $\dd t$ rather than $\dd x$, and we do not
claim it vanishes. Consequently $\varepsilon^{*}$ above is a prediction
conditional on that neglect, whereas the slope $2\beta/\pi$ in
Eq.~\eqref{eq:epspred} is not conditional on anything: it is the exact entropic
contribution, and the residual is by definition the frenetic one. This
measurement requires only the addition of two lines to the integrator used above.

%=====================================================================
\section{Limitations}
\label{sec:limits}

We state these directly, because the claims above are only as strong as these
bounds allow.

\paragraph*{Adiabatic approximation, and a negative result.}
Equation~\eqref{eq:Fbox} assumes transverse relaxation is fast compared with
longitudinal progress. Over the five-point family of Table~\ref{tab:scan} the
uncorrected functional gives $\chi^{2}=4.3$ on five points, so at present
statistics there is no significant systematic to correct. We nonetheless tested
the refinement one would naturally reach for: a position-dependent longitudinal
diffusivity of Zwanzig and Reguera--Rub\'i
type~\cite{Zwanzig1992,Reguera2001},
$\Dpar\to\Dpar[1+(\dd w/\dd s)^{2}]^{-1/3}$, with the effective half-width
identified as $Z_{\perp}$. It makes the agreement dramatically worse, raising
$\chi^{2}$ from $4.3$ to $4.1\times10^{2}$ and pushing every prediction back
toward $1{:}1$. We report this explicitly because the correction is an obvious
step to take and it fails: it was derived for hard-wall channels, in which $w$ is
a geometric boundary, and it does not transfer to soft harmonic confinement,
where the analogous $|\dd w/\dd s|$ is large and the penalty falls hardest on
precisely the wide channel that is in fact favoured. A correct
finite-relaxation-time correction for soft channels remains to be derived.

\paragraph*{$\Phi$ is not all of the frenesy.}
Equation~\eqref{eq:frenesy} has three terms and $\Phi$ descends from one of them.
Under the adiabatic tube reduction the other two contribute a common measure that
largely cancels between channels sharing endpoints, which is why $\Phi$ alone
reproduces the measurement. We have not proved that this cancellation is exact,
and outside the adiabatic regime we would not expect it to be. Statements in this
paper of the form ``$\Phi$ is the geometric component of the frenesy'' should not
be read as ``$\Phi$ is the frenesy''.

\paragraph*{Pairing, not dimensionality.}
Proposition~\ref{prop:spectator} removes what would otherwise be the fatal
obstacle but replaces it with the requirement of a correspondence map between
channels. Our test system supplies this by symmetry; a real pair of pathways does
not. The residual error is controlled by the mismatch of spectator stiffnesses
under the chosen pairing, which is measurable but not zero, and we have not
tested the framework under an imperfect pairing.

\paragraph*{Friction regime.}
This is the sharpest boundary on the scope of the paper. Our treatment is
overdamped throughout. Much of the experimental literature on
post-transition-state bifurcations concerns gas-phase or low-friction organic
reactions in which selectivity is governed by dynamical matching and
non-statistical, inertial trajectory
behaviour~\cite{Singleton2003,Carpenter2005,Collins2013,Katsanikas2021}. In that
regime the mechanism described here is not the operative one. The framework
should apply to condensed-phase reactions with substantial solvent friction, and
to biomolecular settings where the overdamped limit is appropriate; it should
\emph{not} be applied to ballistic bifurcation chemistry without first
establishing that the friction is high enough.

\paragraph*{Anharmonicity.}
The harmonic form of $\Phi$ is qualitatively correct but quantitatively
insufficient, as Table~\ref{tab:main} shows. Practical evaluation requires the
exact $Z_{\perp}$, cheap in low dimensions and expensive in high ones.

\paragraph*{Single-particle scope.}
Everything demonstrated here concerns one Brownian particle on a fixed
two-dimensional landscape, embedded at most in eight dimensions. There is no
population, no replication, no autocatalysis, and no chemical network anywhere in
the numerics.

\paragraph*{No experimental validation.}
Everything above is analytic and numerical. The prediction that among channels
with equal barriers and equal dissipation the looser channel is favoured by
$e^{\Gsel}$, computable from the landscape alone, has not been tested against
experiment.

%=====================================================================
\section{Discussion}

\subsection{What is new here}

Three things in this paper were not available before it. \emph{First}, a
benchmark in which the entropic sector is switched off identically rather than
approximately (Cor.~\ref{cor:zero}), so that a branching measurement is a clean
determination of the frenetic sector. \emph{Second}, the identification of the
covariant geometric component of that sector, $\tfrac12\ln\det(\Dp\Hp)$, together
with the demonstration that the observable previously proposed for the role fails
a coordinate-change test on a physical landscape. \emph{Third}, the criterion
$R$ of Sec.~\ref{sec:disc-ool}, which converts the question of whether a
dissipation-based selection argument is valid from a matter of opinion into a
quantity one estimates. To these we add a parameter-free prediction
(Sec.~\ref{sec:eps}) that has not been measured.

\subsection{What this establishes}

The benchmark of Sec.~\ref{sec:landscape} realises simultaneous Langer and
entropic degeneracy. In it, every theory whose input is barrier height and saddle
curvature predicts $\Delta \Kfr=0$; every theory whose input is integrated
dissipation predicts nothing at all, since $\Delta\sigma\equiv0$ by
Cor.~\ref{cor:zero}; and the measured value is $\Delta \Kfr=0.1732\pm0.0048$,
reproduced to $0.5\sigma$ by a covariant geometric object with no fitted
parameter, and tracked across a one-parameter family at
$\chi^{2}/\mathrm{dof}=0.86$ with a passing null control.

Stated at the level we are willing to defend: \emph{in a class of gradient
systems in which the entropy-production difference vanishes by construction,
pathway selection is governed by the frenetic component of the path measure, and
its transverse-geometric part is computable and sufficient to predict the
branching.} We do not claim more than that sentence, and in particular we do not
claim anything about the origin of life.

\subsection{For Ref.~\cite{SegalGeom}}

The geometric intuition survives and the observable is replaced. In place of
$\nabla\!\cdot b$---non-invariant, trace-only, and regulator-dependent---the
correct object is $\tfrac12\ln\det(\Dp\Hp)$: a coordinate scalar, the exact
prefactor-order term produced by the transverse Gaussian integral, a
Fisher-information log-determinant, and a component of the frenesy. The claim
that a geometric observable distinguishes channels degenerate under free-energy
analysis becomes a theorem (Prop.~\ref{prop:insuff}) with a $36\sigma$ numerical
demonstration. We regard the target of that preprint as vindicated and its
instrument as superseded.

\subsection{For Ref.~\cite{SegalOoL}}
\label{sec:disc-ool}

We treat this case quantitatively rather than rhetorically, because the question
of whether a dissipation-based selection principle is correct should be settled
by calculation.

\paragraph*{What is established in its favour.}
The ansatz $P_{1}/P_{2}\approx e^{(\sigma_{1}-\sigma_{2})/2}$ is the exact
$\Delta \Kfr\to0$ limit of Eq.~\eqref{eq:split}. The coefficient $\tfrac12$ is
not fitted and not analogical: it is the unique coefficient produced by splitting
the action into parts even and odd under time reversal. The sign and direction of
the bias are likewise correct. Whenever two histories have equal frenesy, the
ansatz is not an approximation at all but an identity.

\paragraph*{A validity criterion.}
To determine when the neglected term matters we enumerated exactly all $3125$
closed paths of length six on driven five-state Markov chains, for which
$\ln P=\tfrac12\sigma-\Kfr$ holds identically (verified to
$7\times10^{-15}$), and asked how often $\tfrac12\Delta\sigma$ alone predicts the
correct ordering of $P_{1}$ and $P_{2}$ over all $4.9\times10^{6}$ pairs. Scanning
$140$ chains over three decades of driving strength and edge-activity
heterogeneity, the accuracy is not controlled by the driving. It is controlled by
the single ratio
\begin{equation}
  R\;\equiv\;\frac{\mathrm{std}\bigl(\Delta \Kfr\bigr)}
                  {\mathrm{std}\bigl(\tfrac12\Delta\sigma\bigr)} ,
  \label{eq:Rcrit}
\end{equation}
with rank correlation $-0.97$ between $\ln R$ and accuracy
(Table~\ref{tab:ansatz}). The ansatz is a good approximation for $R\lesssim0.5$
and carries essentially no information for $R\gtrsim2$. We regard
Eq.~\eqref{eq:Rcrit} as the practical content of this subsection: it converts
``is the ansatz valid?'' into a quantity one can estimate for a given network.

\begin{table}[b]
\caption{\label{tab:ansatz}%
Accuracy with which $\tfrac12\Delta\sigma$ alone orders the exact path-class
probabilities, as a function of the criterion $R$ of Eq.~\eqref{eq:Rcrit}. Exact
enumeration, $140$ driven five-state chains, $4.9\times10^{6}$ pairs each.
Chance level is $50\%$.}
\begin{ruledtabular}
\begin{tabular}{lcc}
$R$ & chains & ordering accuracy\\
\colrule
$<0.2$      & 2  & $95.1\%$\\
$0.2$--$0.5$ & 22 & $88.7\%$\\
$0.5$--$1.0$ & 58 & $80.9\%$\\
$1.0$--$2.0$ & 26 & $71.9\%$\\
$>2$        & 32 & $58.2\%$\\
\end{tabular}
\end{ruledtabular}
\end{table}

\paragraph*{Where the growth model sits.}
For the birth--death kinetics of Ref.~\cite{SegalOoL}---birth rate $k(t)n$ with
$k=k_{A}$ or $k=k_{0}+\alpha t$, death rate $\mu n$, clamped fuel---both sectors
are extensive in the number of reaction events. Integrating the deterministic
kinetics for $k_{A}=1$, $k_{0}=0.8$, $\alpha=0.25$, $\mu=0.1$, the ratio of the
activity difference to $\tfrac12\Delta\sigma$ is $0.72$, $0.61$, $0.55$, $0.51$,
$0.48$ at $\tau=5,10,15,20,25$, while the populations span ten to forty-one
decades. The two sectors therefore grow at the \emph{same} super-exponential
rate; neither becomes asymptotically negligible, and $R$ does not tend to zero.
The doubly-exponential amplification claimed in Ref.~\cite{SegalOoL} does not
follow from $\sigma$ alone.

It is equally important that this does not refute the qualitative claim. The
ratio settles near $0.5$ rather than exceeding unity, so
$\tfrac12\Delta\sigma-\Delta \Kfr$ retains the sign of $\tfrac12\Delta\sigma$
provided the $O(1)$ coefficient relating $\Delta \Kfr$ to the activity difference
is less than about two. A reduced but non-vanishing bias toward the adaptive
history is therefore entirely consistent with our calculations. What is not
consistent is asymptotic dominance. Determining the coefficient requires the path
measure of an explicit reaction network, which neither Ref.~\cite{SegalOoL} nor
this paper supplies.

\paragraph*{The logical gap, stated precisely.}
It is worth isolating the single inferential step at which
Ref.~\cite{SegalOoL} leaves what it can support. The preprint establishes, by
ordinary kinetics, that an adaptive replicator accumulates dissipation
super-exponentially while a static autocatalyst accumulates it exponentially.
That is correct arithmetic. It then converts a difference in \emph{accumulated
dissipation} into a difference in \emph{probability of emergence} by means of the
ansatz. The gap is that the conversion is applied as though $\sigma$ were the
only extensive functional of the history that enters the path weight. It is not.
Equation~\eqref{eq:split} contains a second extensive functional, $\Kfr$, and our
integration of the preprint's own kinetics shows the two grow at the same
super-exponential rate with a ratio settling near $0.5$. The conclusion is
therefore not ``$\sigma$ wins'', but ``$\sigma$ wins by a margin equal to an
$O(1)$ bracket that nobody has evaluated''. A super-exponentially large
difference multiplied by an unknown bracket of undetermined sign is not an
asymptotically dominant bias; it is an open question with a large prefactor. That
is the whole of our disagreement with Ref.~\cite{SegalOoL}, and it is a narrower
disagreement than the preprint's critics have generally supposed.

\paragraph*{A separate difficulty: extensivity.}
Entropy production is extensive. For a millimole of reactions at $30\,\kT$ each,
$\sigma\sim2\times10^{22}$, and the ansatz returns
$P_{1}/P_{2}\sim e^{10^{22}}$. A ratio of that size is not a bias but a
prohibition: it asserts that the less dissipative history is strictly impossible.
Simple autocatalysis is observed to occur. The ansatz therefore cannot be applied
at extensive $\sigma$ without a coarse-graining scale that fixes what counts as
one history, and Ref.~\cite{SegalOoL} does not supply one. This objection is
independent of the frenesy, and in our view it is the more serious of the two.

\paragraph*{What survives independently.}
Two elements of Ref.~\cite{SegalOoL} do not depend on the ansatz at all and are
untouched by the above. The threshold conditions---fidelity, kinetic, resource,
parasitism---are statements about replicator dynamics and stand or fall on their
own. And the proposed experimental signature, convexity of
$\ln P_{\rm diss}(t)$, follows from the assumed kinetics directly and is
falsifiable without any appeal to path probabilities. We regard the experimental
proposal as the most robust part of that preprint.

\paragraph*{Summary.}
Reference~\cite{SegalOoL} is not refuted by anything computed here. Its mechanism
is real, its coefficient is exactly right, and its provenance is repaired. Its
asymptotic conclusion is unproven, its scale of application is unresolved, and
the calculation that would decide the matter has not been performed by anyone.
This paper should be cited for the decomposition, the criterion
Eq.~\eqref{eq:Rcrit}, and the benchmark---not as support for the conclusions
of Ref.~\cite{SegalOoL}.

\subsection{Two conjectures}

We isolate here the statements that are not established, so that they can be
attacked rather than assumed.

\begin{conjecture}[two-tier rate functional]
\label{conj:twotier}
There exists a large-deviation functional~\cite{FreidlinWentzell} on classes of
histories of the form
$I[\gamma]=\beta\,\Sigma[\gamma]+\Phi[\gamma]+O(\beta^{-1})$, with
$\Sigma$ the accumulated dissipation of the class and $\Phi$ the integrated
geometric potential of Eq.~\eqref{eq:Fbox}, such that
$P_{1}/P_{2}=e^{-\beta(\Sigma_{1}-\Sigma_{2})}e^{-(\Phi_{1}-\Phi_{2})}$.
\end{conjecture}

Equation~\eqref{eq:split} is suggestive of this but does not imply it: it is a
statement about individual paths in a fixed convention, whereas
Conj.~\ref{conj:twotier} is a statement about coarse-grained classes, and the
passage between them requires an equivalence relation on histories that we have
not specified, a demonstration that $\Sigma$ is the correct $O(\beta)$ term for
such classes, and control of the remainder. Testing it would require a system in
which $\Sigma_{1}=\Sigma_{2}$ can be imposed by construction at the level of
classes---as the barrier is imposed by construction here.

\begin{conjecture}[frenetic reading of the error threshold]
\label{conj:eigen}
In a quasispecies model~\cite{Eigen1971}, the delocalisation of the population
over sequence space above the error threshold corresponds to an increase in the
frenesy of the sequence-space path measure, and the threshold coincides with the
point at which the frenetic penalty overtakes the entropic gain, i.e.\ with the
sign change of $\tfrac12\Delta\sigma-\Delta \Kfr$.
\end{conjecture}

We find this reading natural, and we have no evidence for it. It is a
well-posed computation---write the path measure of the quasispecies dynamics,
evaluate both sectors above and below threshold, and check whether the sign change
falls at Eigen's $q_{c}$---and it has not been performed. We state it because a
correct-sounding qualitative correspondence between frenesy and mutational load
is exactly the kind of claim that should be written down as a conjecture rather
than deployed as an argument.

\subsection{Falsifiability}

The claims of Table~\ref{tab:status} marked ``proved'' fail only if the proofs are
wrong. The claims marked ``supported'' fail if any of the following is exhibited:
\emph{(i)} a Langer- and entropically degenerate landscape in which the measured
branching is $1{:}1$ while $\Gsel\neq0$; \emph{(ii)} an increase in statistics
that turns the present $\chi^{2}/\mathrm{dof}\approx0.9$ across
Table~\ref{tab:scan} into a significant systematic deviation; \emph{(iii)} a
realistic pairing of two distinct molecular pathways under which spectator
cancellation degrades so badly that $\Delta$ is dominated by pairing error; or
\emph{(iv)} a condensed-phase system in which $\Gsel$ shows no correlation with
measured branching across a homologous series. The prediction of
Sec.~\ref{sec:eps} fails if the measured slope of $\ln[P(+)/P(-)]$ against
$\varepsilon$ differs from $2\beta/\pi$ by more than the frenetic correction that
Eq.~\eqref{eq:frenesy} permits. Conjectures~\ref{conj:twotier}
and~\ref{conj:eigen} are not part of the claims of this paper and would be
falsified independently.

%=====================================================================
\section*{Data availability}

The code that generates every number, table and figure in this paper is
described in Appendix~\ref{app:repro} and is available from the author on
reasonable request. No experimental data were used. All results are analytic or
generated by the scripts described therein with fixed random seeds.

%=====================================================================
\appendix
\section{Reproduction}
\label{app:repro}

All reported numbers are reproduced by the accompanying scripts:
\texttt{model.py} (potential, gradients, degeneracy verification,
Table~\ref{tab:saddle}); \texttt{simulate.py} (Langevin integrator);
\texttt{analysis.py} (exact transverse partition function, scans, and the
transform to $\Delta \Kfr$); \texttt{invariance.py}
(Prop.~\ref{prop:cov}, Table~\ref{tab:inv}); \texttt{highdim.py}
(Prop.~\ref{prop:spectator}, Table~\ref{tab:spec}); \texttt{zwanzig.py} (the
negative result of Sec.~\ref{sec:limits}); \texttt{decomposition.py}
(numerical verification of Eqs.~\eqref{eq:split} and~\eqref{eq:circ} to machine
precision); \texttt{ansatz.py} (exact path enumeration,
Table~\ref{tab:ansatz}, and the growth-model scaling of
Sec.~\ref{sec:disc-ool}); and \texttt{figure.py} (Fig.~\ref{fig:main}). Seeds are fixed in each
script.

\end{document}